\documentclass{llncs}

\input{preamble}
\usepackage{fullpage}

\addbibresource{papers.bib}

\title{Travelling on Graphs with Small Highway Dimension}
\date{}
\author{Yann~Disser\inst{1}\fnmsep\thanks{Supported by the `Excellence Initiative' of the German Federal and State Governments and the Graduate School~CE at TU~Darmstadt.} \and 
Andreas~Emil~Feldmann\inst{2}\fnmsep\thanks{Supported by the Czech Science 
Foundation GA{\v C}R (grant \#17-10090Y), and by the Center for Foundations of 
Modern Computer Science (Charles Univ.\ project UNCE/SCI/004).}
\and Max~Klimm\inst{3}\fnmsep\thanks{Supported by the German Research Foundation 
(DFG) as part of Math$^+$ (project~AA3-4).}
\and Jochen~K\"onemann\inst{4}\fnmsep\thanks{Supported by the
  Discovery Grant Program of the Natural Sciences and Engineering
  Research Council of Canada} 
}
\institute{TU Darmstadt, Germany, \email{disser@mathematik.tu-darmstadt.de} \and 
Charles University in Prague, Czechia, \email{feldmann.a.e@gmail.com} \and 
Humboldt-Universit\"at zu Berlin, Germany, \email{max.klimm@hu-berlin.de} \and
University of Waterloo, Canada, \email{jochen@uwaterloo.ca}
}

\begin{document}

\tikzset{snode/.style={fill,circle,inner sep=2pt,outer sep=0pt}}
\tikzset{hnode/.style={fill=white,draw=black,inner sep=3pt,outer sep=0pt}}

\maketitle
\setcounter{footnote}{0}

\begin{abstract}
We study the Travelling Salesperson (TSP) and the Steiner Tree problem 
(STP) in graphs of low highway dimension. This graph parameter was introduced by 
Abraham et al.~[SODA~2010] as a model for transportation networks, on which TSP 
and STP naturally occur for various applications in logistics. It was 
previously shown [Feldmann~et al.~ICALP~2015] that these problems admit a 
quasi-polynomial time approximation scheme (QPTAS) on graphs of constant 
highway dimension. We demonstrate that a significant improvement is possible in 
the special case when the highway dimension is~1, for which we present a 
fully-polynomial time approximation scheme (FPTAS). We also prove that STP is 
weakly $\mathsf{NP}$-hard for these restricted graphs. 
For TSP we show $\mathsf{NP}$-hardness for graphs of highway dimension~6, which answers an 
open problem posed in [Feldmann~et al.~ICALP~2015].
\end{abstract}

\section{Introduction}

Two fundamental optimization problems already included in Karp's initial list 
of 21 $\mathsf{NP}$-complete problems~\cite{karp1972} are the \pname{Travelling 
Salesperson} problem~(\pname{TSP}) and the \pname{Steiner Tree} 
problem~(\pname{STP}). Given an undirected graph $G = (V,E)$ with non-negative 
edge weights $w:E\to\mathbb{R}^+$, the \pname{TSP} asks to find the shortest 
closed walk in $G$ visiting all nodes of $V$. Besides its fundamental role in 
computational complexity and combinatorial optimization, this problem has a 
variety of applications ranging from circuit 
manufacturing~\cite{groetschel1991,lenstra1975} and scientific 
imaging~\cite{bland1989} to vehicle routing problems~\cite{laporte1985} in 
transportation networks. For the \pname{STP}, a subset $R \subseteq V$ of nodes 
is marked as \emph{terminals}. The task is to find a weight-minimal connected 
subgraph of $G$ containing the terminals. It has plenty of fundamental 
applications in network design including telecommunication 
networks~\cite{ljubic2006}, computer vision~\cite{chen2017}, circuit 
design~\cite{held2011}, and computational 
biology~\cite{chowdhury2013,loboda2016}, but also lies at the heart of line 
planning in public transportation~\cite{borndorfer2009line}.

Both \pname{TSP} and \pname{STP} are $\mathsf{APX}$-hard in 
general~\cite{chlebik2008,karpinski2015,lampis2014, 
arora1992,bern1989, papadimitriou2006} implying that, unless~$\mathsf{P} = 
\mathsf{NP}$, none of these problems admit a \emph{polynomial-time 
approximation scheme (PTAS)}, i.e., an algorithm that computes a 
$(1+\eps)$-approximation in polynomial time for any given constant $\eps>0$. On 
the other hand, for restricted inputs PTASs do exist, e.g., for planar 
graphs~\cite{borradaile2007Steiner,klein2008TSP,grigni1995,arora1998},
Euclidean and Manhattan 
metrics~\mbox{\cite{arora1998k-median,mitchell1999guillotine}}, 
and more generally low doubling\footnote{A metric is said to have 
\emph{doubling dimension $d$} if for all $r>0$ every ball of radius~$r$ can be 
covered by at most $2^d$ balls of half the radius $r/2$.} 
metrics~\cite{bartal2012traveling}.

We study another class of graphs captured by the notion of \emph{highway 
dimension}, which was proposed by~\citet{abraham2010highway}. This graph 
parameter models transportation networks and is thus of particular importance in 
terms of applications for both \pname{TSP} and \pname{STP}.
On a high level, the highway dimension is based on the empirical observation 
of~\citet{bast2007transit,bast2009ultrafast} that travelling from a 
point in a network to a sufficiently distant point on a shortest path always 
passes through a sparse set of ``hubs''. The following formal definition is 
taken from~\cite{Feldmann15} and follows the lines of 
\citet{abraham2010highway}.\footnote{It is often assumed that all shortest paths 
are unique when defining the highway dimension, since this allows good 
polynomial approximations of this graph parameter~\cite{abraham2011vc}. In this 
work however, we do not rely on these approximations, and thus do not require 
uniqueness of shortest paths.}
Here the \emph{distance} between two vertices is the length of the shortest 
path between them, according to the edge weights. The \emph{ball} $B_v(r)$ of 
radius $r$ around a vertex $v$ contains all 
vertices with distance at most $r$ from $v$.

\begin{dfn}\label{dfn:spc}
For a scale $r \in \mathbb{R}_{>0}$, let $\mc{P}_{(r,2r]}$ denote the set of all 
vertex sets of shortest paths with length in $(r, 2r]$.
A \emph{shortest path cover} for scale $r$ is a hitting set for 
$\mathcal{P}_{(r,2r]}$, i.e., a set $\spc(r) \subseteq V$ such that $|\spc(r) 
\cap P| \neq \emptyset$ for all $P \in \mc{P}_{(r,2r]}$. The vertices of 
$\spc(r)$ are the \emph{hubs} for scale~$r$. 
A shortest path cover $\spc(r)$ is \emph{locally $h$-sparse}, if $|\spc(r) \cap 
B_v(2r)| \leq h$ for all vertices~$v \in V$.
The \emph{highway dimension} of $G$ is the smallest integer $h$ such that there 
is a locally $h$-sparse shortest path cover $\spc(r)$ for every scale 
$r\in\mathbb{R}_{>0}$ in $G$.
\end{dfn}

The algorithmic consequences of this graph parameter were originally studied in 
the context of road 
networks~\cite{abraham2016highway,abraham2010highway,abraham2011vc}, which are 
conjectured to have fairly small highway dimension. Road networks are generally 
non-planar due to overpasses and tunnels, and are also not Euclidean due to 
different driving or transmission speeds. This is even more pronounced in public 
transportation networks, where large stations have many incoming connections and 
plenty of crossing links, making Euclidean (or more generally low doubling) and 
planar metrics unsuitable as models. Here the highway dimension is better 
suited, since longer connections are serviced by larger and sparser stations 
(such as train stations and airports) that can act as hubs.

The main question posed in this paper is whether the structure of graphs with 
low highway dimension admits PTASs for problems such as \pname{TSP} and STP, 
similar to Euclidean or planar instances. It was shown that 
\emph{quasi-polynomial time approximation schemes (QPTASs)} exist for these 
problems~\cite{feldmann2018}, i.e., $(1+\eps)$\hy{}approximation algorithms with 
runtime $2^{\textrm{polylog}(n)}$ assuming that~$\eps$ and the highway dimension 
of the input graph are constants. However it was left open whether this can be 
improved to polynomial time.

\subsection{Our results}

Our main result concerns graphs of the smallest possible highway dimension, and 
shows that for these \emph{fully polynomial time approximation schemes 
(FPTASs)} exist, i.e., a $(1+\eps)$-approximation can be computed in time 
polynomial in both the input size and $1/\eps$. Thus at least for this 
restricted case we obtain a significant improvement over the previously known 
QPTAS~\cite{feldmann2018}.

\begin{thm}\label{thm:main}
Both \pname{Travelling Salesperson} and \pname{Steiner Tree} admit an FPTAS on 
graphs with highway dimension~$1$.
\end{thm}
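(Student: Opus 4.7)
The plan is to first establish a structural characterization of graphs with highway dimension~$1$, then exploit this structure to obtain a pseudopolynomial dynamic programming algorithm for both problems, and finally apply standard weight-rounding to turn the pseudopolynomial algorithm into an FPTAS.

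First I would analyse the consequences of having a locally $1$-sparse shortest path cover at every scale. The constraint that every ball $B_v(2r)$ contains at most a single hub of $\spc(r)$ is very restrictive: any two shortest paths of length in $(r,2r]$ that pass through a common ball of radius $2r$ must share their unique hub, so hubs on long shortest paths behave almost like cut vertices separating the graph into pieces of bounded local diameter. I expect this to force a path-like or tree-like hierarchical decomposition in which the separators between blocks consist of only $O(1)$ hub vertices and each piece at scale~$r$ has diameter only $O(r)$. The precise target of the analysis is a recursive decomposition along which both \pname{STP} and \pname{TSP} admit dynamic programs with a constant number of interface vertices per subproblem.

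Second, assuming such a decomposition, I would design a dynamic program over it. For \pname{STP}, a state at a subproblem would record which terminals beneath it have already been connected, how the partial solution attaches to the $O(1)$ interface hubs, and the accumulated weight of the partial solution so far. For \pname{TSP}, the state additionally records the entry/exit points of the partial walk at the interface hubs together with their pairing into matched segments. Because the interface has bounded size, there are only polynomially many ``combinatorial'' states per weight value, so the DP runs in time polynomial in~$n$ and in the maximum edge weight~$W$, i.e.\ pseudopolynomially.

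Finally, the pseudopolynomial DP is turned into an FPTAS in the usual way: compute a crude constant-factor estimate~$A$ of the optimum, round every edge weight to the nearest multiple of $\eps A / n$, and run the DP on the rounded instance. The optimum of the rounded instance lies within a $(1+\eps)$ factor of the original optimum, while the scaled weights are bounded by $O(n/\eps)$, so the DP now runs in time polynomial in~$n$ and~$1/\eps$. The main obstacle is the structural first step: extracting from the highway dimension~$1$ condition a decomposition precise enough to simultaneously support small-state dynamic programs for \pname{STP} and \pname{TSP}, and in particular ruling out pathological configurations (e.g.\ many long internally disjoint paths between the same pair of hubs) that would inflate the number of interface states. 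Once the right decomposition is in hand, the DP design and the rounding argument are comparatively routine.
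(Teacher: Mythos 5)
Your high-level plan (structure theorem $\to$ decomposition-based DP $\to$ weight reduction) matches the paper's in outline, but the structural target you set yourself is unachievable, and this is not a cosmetic issue. There exist weighted complete graphs of highway dimension~$1$ (the paper cites this explicitly), so these graphs are not minor-closed and can have treewidth $n-1$; no recursive decomposition with $O(1)$ interface vertices per piece can exist. Indeed, if such a decomposition were computable, your DP would solve \pname{STP} exactly in strongly polynomial time, contradicting the weak $\mathsf{NP}$-hardness established in this very paper (unless $\mathsf{P}=\mathsf{NP}$). The correct structural statement is weaker: a level-$i$ component (connected component of the subgraph of edges of length at most $2\cdot 2^i$) contains at most one hub of $\spc(2^i)$, and its interface to the rest of the graph consists of at most one hub \emph{per level} $j\geq i$, giving treewidth $O(\log\alpha)$ where $\alpha$ is the aspect ratio --- a quantity that can be exponential in the input size. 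The pseudopolynomiality thus enters through the number of levels, not through tracking accumulated weight in the DP state (which is unnecessary for these problems on a tree decomposition).

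Two further consequences follow that your proposal does not address. First, your rounding step (snapping edge weights to multiples of $\eps A/n$) reduces the aspect ratio but may destroy the highway dimension~$1$ property, so the structural lemma cannot be reapplied to the rounded instance; and the original instance's decomposition has width $O(\log\alpha)$ with $\alpha$ possibly exponential, so it cannot be used directly either. The paper resolves this by computing an $(r,3r)$-net with exactly one net point per component of $G_{\leq r}$, contracting components to net points, and showing that the contracted graph \emph{inherits} a width-$O(\log(n/\eps))$ decomposition from the original graph's decomposition --- this transfer argument is the ``special care'' step and is missing from your plan. Second, with bags of size $t=\Theta(\log(n/\eps))$ the naive \pname{TSP} DP over pairings of interface vertices has $t^{\Theta(t)}=2^{\Theta(\log(n/\eps)\log\log(n/\eps))}$ states, which is superpolynomial; one must invoke the $2^{O(t)}n$-time algorithms of Bodlaender et al.\ to land in $(n/\eps)^{O(1)}$. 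So the decomposition, the aspect-ratio reduction compatible with it, and the choice of treewidth algorithm all need to be repaired before the FPTAS goes through.
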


From an application point of view, so-called hub-and-spoke 
networks that can typically be seen in air traffic networks can be argued to 
have very small highway dimension close to~$1$: their star-like structure 
implies that hubs are needed at the centers of stars only, where all shortest 
paths converge. From a more theoretical viewpoint, we show that surprisingly 
the \pname{STP} problem is non-trivial on graphs 
highway dimension $1$, since it is still $\mathsf{NP}$-hard even on this very
restricted case. 
Interestingly, together with \cref{thm:main} this implies~\cite{vazirani01book} 
that \pname{STP} is \emph{weakly} $\mathsf{NP}$-hard on graphs of highway 
dimension~$1$. This is in contrast to planar graphs or Euclidean metrics, for 
which the problem is strongly $\mathsf{NP}$-hard.

\begin{thm}\label{thm:STP-hard}
The \pname{Steiner Tree} problem is weakly $\mathsf{NP}$-hard on graphs with 
highway dimension $1$.
\end{thm}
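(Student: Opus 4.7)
\medskip
\noindent\textbf{Proof plan.}
The plan is to derive \cref{thm:STP-hard} from \cref{thm:main} together with a polynomial-time reduction establishing $\mathsf{NP}$-hardness of \pname{STP} on graphs of highway dimension~$1$. By the classical Garey--Johnson result, any strongly $\mathsf{NP}$-hard optimisation problem admitting an FPTAS satisfies $\mathsf{P}=\mathsf{NP}$; hence \cref{thm:main} already rules out strong $\mathsf{NP}$-hardness, and plain $\mathsf{NP}$-hardness is enough to conclude \emph{weak} $\mathsf{NP}$-hardness.

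I would reduce from \pname{Partition}: given positive integers $a_1, \dots, a_n$ with $\sum_i a_i = 2B$, construct an edge-weighted graph $G$ with terminal set $R$ so that $G$ has highway dimension~$1$ and admits a Steiner tree of weight at most a precomputed threshold $W^{\star}$ if and only if some $S \subseteq \{1,\dots,n\}$ satisfies $\sum_{i \in S} a_i = B$. The construction would be \emph{hub-centric}: a single central vertex $c$ is attached to $n$ gadgets in parallel, one per integer $a_i$. Each gadget is a small structure---for instance a diamond with two parallel arcs of weights~$M$ and $M + a_i$ for a sufficiently large parameter~$M$---together with an auxiliary terminal that forces the Steiner tree to commit to exactly one of the two arcs, contributing either $0$ or $a_i$ above a fixed baseline to the total weight.

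The heart of the argument is verifying that $G$ has highway dimension exactly~$1$. I would analyse the shortest path cover scale by scale. For scales $r$ small compared to the intra-gadget diameter, a single hub per gadget suffices, and the $n$ hubs are pairwise separated by distance of order $2M$, preserving local sparsity $h=1$. For intermediate scales comparable to $M$, every sufficiently long shortest path must cross~$c$, so the singleton $\{c\}$ is already a valid shortest path cover; and for scales above the diameter of~$G$ the required cover is empty. The delicate point is calibrating the weights of the diamond arcs and auxiliary edges so that no ball of radius~$2r$ ever contains two hubs, while simultaneously all ``medium-length'' shortest paths are hit; carrying out this case analysis is the primary obstacle.

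Finally, I would argue the correctness of the reduction: the binary choices induced by any Steiner tree across the $n$ gadgets correspond bijectively to a subset $S \subseteq \{1,\dots,n\}$, and the resulting tree weight equals a fixed offset plus $\sum_{i \in S} a_i$. Hence a Steiner tree of weight at most $W^{\star}$ exists iff \pname{Partition} has a \emph{yes}-answer, completing the $\mathsf{NP}$-hardness reduction. Combined with \cref{thm:main}, this yields the weak $\mathsf{NP}$-hardness of \pname{STP} on graphs of highway dimension~$1$ asserted in \cref{thm:STP-hard}.
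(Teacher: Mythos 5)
Your high-level frame is the right one and matches the paper's: exhibit a polynomial-time $\mathsf{NP}$-hardness reduction, and combine it with the FPTAS of \cref{thm:main} to conclude that the hardness can only be \emph{weak}. The problem is that the concrete reduction you sketch does not work. You reduce from \pname{Partition} and design gadgets so that ``the resulting tree weight equals a fixed offset plus $\sum_{i\in S}a_i$'' for an arbitrary subset $S$ of gadget choices. But then the minimum Steiner tree simply takes the cheaper arc in every gadget, achieving weight exactly the offset, and the test ``weight $\leq W^{\star}=\text{offset}+B$'' is satisfied for \emph{every} \pname{Partition} instance. Steiner tree is a pure sum-minimisation subject to connectivity; to encode \pname{Partition} you would also need a mechanism forcing $\sum_{i\in S}a_i\geq B$ (some second ``collector'' that must accumulate at least $B$ worth of connections), and your construction has none. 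As stated, the equivalence ``Steiner tree of weight $\leq W^{\star}$ iff yes-instance'' is false in one direction.

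Separately, you explicitly defer the verification that the constructed graph has highway dimension~$1$, calling it ``the primary obstacle'' --- but that obstacle is precisely where the work lies, and your star-of-diamonds layout makes it hard: at scales $r$ comparable to $M$, each diamond arc is a shortest path of length in $(r,2r]$ that does not pass through the centre $c$, so each gadget needs its own hub, and a ball of radius $2r\approx 2M$ around $c$ then contains many hubs unless the weights are arranged very carefully. The paper avoids both difficulties by reducing from SAT instead: variable paths of length $1$ hang off a root, clause terminals $v_i$ attach via edges of length $11^{i+1}$, and the geometric growth of the weights guarantees that at every scale a \emph{single} vertex $v_i$ covers all relevant shortest paths, giving highway dimension~$1$ essentially for free; the correctness argument is then a standard exchange argument on the optimal tree. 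If you want to keep a \pname{Partition}-style reduction you must redesign the gadget so that balance is forced, and then actually carry out the scale-by-scale hub analysis; otherwise I recommend switching to a SAT-type reduction with geometrically scaled weights.
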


It was in fact left as an open problem in~\cite{feldmann2018} to determine the 
hardness of \pname{STP} and also \pname{TSP} on graphs of constant highway 
dimension. \cref{thm:STP-hard} settles this question for STP. We also 
answer the question for TSP, but in this case we are not able to bring down the 
highway dimension to~$1$ so that the following theorem does not complement 
\cref{thm:main} tightly.

\begin{thm}\label{thm:TSP-hard}
The \pname{Travelling Salesperson} problem is $\mathsf{NP}$-hard on graphs with 
highway dimension~$6$.
\end{thm}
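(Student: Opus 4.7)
To prove Theorem~\ref{thm:TSP-hard}, I would exhibit a polynomial-time reduction from a classical $\mathsf{NP}$-hard problem to \pname{TSP} on a carefully engineered graph whose highway dimension I can bound by $6$. Natural source problems are \pname{Hamiltonian Cycle} in a structurally restricted class, or a numerical problem such as \pname{Partition} in the spirit of \cref{thm:STP-hard}. A reduction from \pname{Partition} feels well suited: unlike \pname{STP}, where inclusion of an item can be modelled by choosing the cheap side of a ``diamond'', \pname{TSP} forces every vertex to be visited, so each gadget must instead offer two equal-cardinality traversal options whose \emph{lengths} differ by a quantity proportional to the item weight.

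The construction I would set up is a long ``backbone'' path or cycle of large weight, onto which $n$ small constant-size gadgets are attached, one per item $a_i$. Each gadget consists of a constant number of vertices and internal edges arranged so that any tour must enter and leave through the two attachment points on the backbone, and the two possible traversal patterns inside contribute either $L + a_i$ or $L - a_i$ to the total cost (for some fixed $L$). Careful choice of backbone edge weights (all much larger than any $a_i$, but still polynomially bounded) should make it optimal to traverse the backbone monotonically, handling each gadget exactly once in sequence. One then sets a threshold $W$ so that tours of weight at most $W$ correspond precisely to a partition of $\{a_1,\dots,a_n\}$ into two equal-sum halves; the forward and backward implications follow from the fact that the only way to save weight is to balance the ``plus $a_i$'' and ``minus $a_i$'' contributions across the $n$ gadgets.

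The main obstacle is showing that the resulting graph has highway dimension at most $6$, and the bound of $6$ (rather than $1$ or $2$) is what dictates the precise gadget design. For this I would exhibit, for every scale $r$, a shortest path cover $\spc(r)$ and verify local sparsity in every ball $B_v(2r)$. At scales $r$ smaller than the gadget diameter, all length-$(r,2r]$ shortest paths lie inside a single gadget or near its attachment point, and a constant number of hubs per gadget suffice. At scales comparable to the gadget diameter, shortest paths may cross one backbone segment together with parts of one or two adjacent gadgets, and the $6$ hubs per ball would come from a combination of gadget entry/exit vertices and backbone hubs. At large scales, shortest paths essentially follow the backbone, which is path-like and requires only a single hub per scale locally. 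The delicate case analysis is the medium scale: making sure that the gadgets are spaced along the backbone so that no ball of radius $2r$ sees too many of them simultaneously, while still keeping the total graph size polynomial. If any scale forces more than $6$ paths through a single ball, the gadgets must be redesigned (e.g.\ by enlarging their attachment distances or merging redundant internal vertices) until the bound $6$ is achieved; the hard part is arguing that some such design exists while still encoding \pname{Partition}.
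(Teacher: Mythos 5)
Your proposal is a plan rather than a proof, and the step it leaves open is precisely the one that carries all the difficulty. You write that ``if any scale forces more than $6$ paths through a single ball, the gadgets must be redesigned \dots\ the hard part is arguing that some such design exists.'' That is the theorem: without an explicit gadget and an explicit scale-by-scale verification of local $6$-sparsity, nothing has been established. The Partition-based design you sketch makes this verification particularly treacherous, because the two traversal options inside gadget $i$ must differ in length by $a_i$, so the internal edge weights depend on the arbitrary item weights $a_1,\dots,a_n$. These weights are not separated from one another, so for a scale $r$ comparable to several $a_i$ simultaneously you may have many gadgets each containing a shortest path of length in $(r,2r]$; you then need either enough spacing on the backbone that no ball $B_v(2r)$ sees two such gadgets, or hubs inside every gadget at that scale, and you have not argued that either can be done with only $6$ hubs per ball while keeping the reduction correct. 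You also need to rule out shortcuts through the backbone that would let a tour skip the intended $L\pm a_i$ dichotomy; this is a standard failure mode of TSP gadget reductions and is not addressed.

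The paper's proof takes a different and, crucially, \emph{self-verifying} route. It reduces from $(\leq 3,3)$-SAT and builds the graph incrementally using edge costs $a\ll b\ll c_1\ll\dots\ll d\ll e\ll f_1\ll\dots\ll f_m$, each ``safe'' (at least twice the sum of all cheaper costs). The key tool is \cref{lem:hardness_hd_edge_addition}: if you add a batch $E'$ of safe-cost edges to a graph $G$, then $\hd(G\cup E')\leq\max\{\hd(G),|E'|\}$, because at scales below the new costs the old cover still works, and at scales at or above them every sufficiently long path must use a new edge, so a vertex cover of $E'$ (size at most $|E'|$) is a shortest path cover. Since each clause contributes a batch of six connecting edges, the bound $\hd(G)\leq 6$ falls out with no case analysis over balls at all. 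If you want to salvage your approach, you would need an analogous mechanism that decouples the highway-dimension bound from the numerical item weights; as written, your argument has a genuine gap at its core.
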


\subsection{Techniques}

We present a step towards a better understanding of low highway dimension graphs 
by giving new structural insights on graphs of highway dimension~$1$. It is not 
hard to find examples of (weighted) complete graphs with highway dimension~$1$ 
(cf.~\cite{feldmann2018}), and thus such graphs are not minor-closed. 
Nevertheless, it was suggested in~\cite{feldmann2018} that the \emph{treewidth} 
of low highway dimension graphs might be bounded polylogarithmically in terms of 
the \emph{aspect ratio $\alpha$}, which is the maximum distance divided by the 
minimum distance between any two vertices of the input graph.

\begin{dfn}\label{dfn:treewidth}
A \emph{tree decomposition} of a graph $G=(V,E)$ is a tree $D$ where each node 
$v$ is labelled with a bag $X_v \subseteq V$ of vertices of $G$, such that the 
following holds:
\begin{inparaenum}[(a)]
\item\label{item:tw-union} $\bigcup_{v \in V(D)} X_v = V$, 
\item\label{item:tw-edges} for every edge $\{u,w\}\in E$ there is a
  node $v \in V(D)$ such that $X_v$ contains both $u$ and $w$, and
\item\label{item:tw-vertices} for every $v\in V$ the set $\{u \in V(D)
  \mid v \in X_u\}$ induces a connected subtree of $D$.
\end{inparaenum}
The \emph{width} of the tree decomposition is $\max\{|X_v|-1\mid v
\in V(D)\}$. The \emph{treewidth} of a graph $G$ is the minimum width 
among all tree decompositions for~$G$.
\end{dfn}

As suggested in~\cite{feldmann2018}, one may hope to prove that the treewidth 
of any graph of highway dimension~$h$ is, say, $O(h\,\textrm{polylog}(\alpha))$. 
As argued in \cref{sec:concl}, it unfortunately is unlikely that such a bound 
is generally possible. In contrast to this, our main structural insight on 
graphs of highway dimension $1$ is that they have treewidth $O(\log\alpha)$. 
This implies FPTASs for \pname{TSP} and STP, since we may reduce the aspect 
ratio of any graph with $n$ vertices to $O(n/\eps)$ and then use algorithms by 
\mbox{\citet{bodlaender2013deterministic}} to compute optimum solutions to 
\pname{TSP} and \pname{STP} in graphs of treewidth~$t$ in $2^{O(t)}n$ time. 
Since reducing the aspect ratio distorts the solution by a factor of~$1+\eps$, 
this results in an approximation scheme. Although these are fairly standard 
techniques for metrics (cf.~\cite{feldmann2018}), in our case we need to take 
special care, since we need to bound the treewidth of the graphs resulting from 
this reduction, which the standard techniques do not guarantee.

It remains an intriguing open problem to understand the complexity and 
structure of graphs of constant highway dimension larger than $1$.

\subsection{Related work}

The \pname{Travelling Salesperson} problem (\pname{TSP}) is among Karp's initial 
list of 21 $\mathsf{NP}$-complete problems~\cite{karp1972}. For general metric 
instances, the best known approximation algorithm is due to \citet{Chr76TSP} and 
computes a solution with cost at most $3/2$ times the LP value. For unweighted 
instances, the best known approximation guarantee is $7/5$ and is due to 
\citet{sebo2014}. In general the problem is 
$\mathsf{APX}$-hard~\cite{karpinski2015,lampis2014,papadimitriou2006}. For 
geometric instances where the nodes are points in $\mathbb{R}^d$ and distances 
are given by some $l_p$-norm, there exists a 
PTAS~\cite{arora1998TSP,mitchell1999guillotine} for fixed $d$. When $d= \log n$, 
the problem is $\mathsf{APX}$-hard~\cite{trevisan2000}.  
\citet{krauthgamer2006algorithms} generalized the PTAS to hyperbolic space. 
\citet{grigni1995} gave a PTAS for unweighted planar graphs which was later 
generalized by \mbox{\citet{arora1998}} to the weighted case. For improvements 
of the running time see \citet{klein2008TSP}.

The \pname{Steiner Tree} problem (\pname{STP}) is contained in Karp's list of 
$\mathsf{NP}$-complete problems as well~\cite{karp1972}. The best approximation 
algorithm for general metric instances is due to \citet{byrka2010} and computes 
a solution with cost at most $\ln(4) + \epsilon<1.39$ times that of an LP 
relaxation. 
Their algorithm improved upon previous results by, e.g., 
\citet{robins2005} and \mbox{\citet{hougardy1999}}. 
Also the \pname{STP} is 
$\mathsf{APX}$-hard~\cite{chlebik2008} in general. For Euclidean distances and 
nodes in $\mathbb{R}^d$ with $d$ constant there is a PTAS due to 
\citet{arora1998TSP}. For $d = \log |R| / \log \log |R|$ where $R$ is the 
terminal set, the problem is $\mathsf{APX}$-hard~\cite{trevisan2000}. For planar 
graphs, there is a PTAS for \pname{STP}~\cite{borradaile2007Steiner},
and even for the more general \pname{Steiner Forest} problem for graphs with 
bounded genus~\cite{bateni2011}.
Note that \pname{STP} remains $\mathsf{NP}$-complete for planar 
graphs~\cite{garey1977}.

It is worth mentioning that alternate definitions of the highway dimension 
exist.\footnote{See~\cite[Section~9]{feldmann2018} and \cite{blum2019hierarchy} 
for detailed discussions on different definitions of the highway dimension.} In 
particular, in a follow-up paper to~\cite{abraham2010highway}, 
\citet{abraham2016highway} define a version of the highway dimension, which 
implies that the graphs also have bounded doubling dimension. A related model 
for transportation networks was given by \citet{kosowski2017beyond} via the 
so-called \emph{skeleton dimension}, which also implies bounded doubling 
dimension. Hence for these definitions, \citet{bartal2012traveling} already 
provide a PTAS for \pname{TSP}. The highway dimension definition used here 
(cf.~\cref{dfn:spc}) on the other hand allows for metrics of large doubling 
dimension as noted by \citet{abraham2010highway}: a star has highway 
dimension~$1$ (by using the center vertex to hit all paths), but its doubling 
dimension is unbounded. While it may be reasonable to assume that road networks 
(which are the main concern in the works of 
\citet{abraham2016highway,abraham2010highway,abraham2011vc}) have low doubling 
dimension, there are metrics modelling transportation networks for which it can 
be argued that the doubling dimension is large, while the highway dimension 
should be small. These settings are better captured by \cref{dfn:spc}. For 
instance, the so-called hub-and-spoke networks that can typically be seen in air 
traffic networks are star-like networks and are unlikely to have small doubling 
dimension while still having very small highway dimension close to~$1$. Thus in 
these examples it is reasonable to assume that the doubling dimension is a lot 
larger than the highway dimension. 

\citet{feldmann2018} showed that graphs with low highway dimension can be 
embedded into graphs with low treewidth. This embedding gives rise to a QPTAS 
for both \pname{TSP} and \pname{STP} but also other problems. However, the 
result in~\cite{feldmann2018} is only valid for a less general definition of 
the highway dimension from~\cite{abraham2011vc}, i.e., there are graphs which 
have constant highway dimension according to \cref{dfn:spc} but for which the 
algorithm of~\cite{feldmann2018} cannot be applied. For the less general 
definition from~\cite{abraham2011vc}, \citet{DBLP:conf/esa/BeckerKS18} give a 
PTAS for \pname{Bounded-Capacity Vehicle Routing} in graphs of bounded highway 
dimension. Also the \pname{$k$-Center} problem has been studied on graphs of 
bounded highway dimension, both for the less general 
definition~\cite{DBLP:conf/esa/BeckerKS18} and the more general one used 
here~\cite{DBLP:conf/swat/FeldmannM18,Feldmann15}.

\section{Structure of graphs with highway dimension 1}

In this section, we analyse the structure of graphs with highway dimension $1$. 
To this end, let us fix a graph $G$ with highway dimension $1$ and a shortest 
path cover $\spc(r)$ for each scale $r \in \mathbb{R}^+$. As a preprocessing, 
we remove edges that are longer than the shortest path between their endpoints, 
so that the triangle inequality holds.

We begin by analysing the structure of the graph $G_{\leq 2r}$, which is 
spanned by all edges of the input graph $G$ of length at most $2r$. If $G$ has 
highway dimension~$1$ it exhibits the following key property.

\begin{lem}\label{lem:hd1-G2r}
Let $G$ be a metric graph with highway dimension $1$, $r \in \mathbb{R}^+$ a 
scale, and $\spc(r)$ a shortest path cover for scale $r$. Then, every connected 
component of $G_{\leq 2r}$ contains at most one hub.
\end{lem}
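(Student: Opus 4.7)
My plan is a proof by contradiction. Suppose some connected component of $G_{\leq 2r}$ contains two distinct hubs $u, v \in \spc(r)$; the goal is to exhibit a vertex $w \in V$ with $|\spc(r) \cap B_w(2r)| \geq 2$, violating the local $1$-sparsity of $\spc(r)$.

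The first observation I would record and use throughout is that local $1$-sparsity alone already forces any two distinct hubs to satisfy $d(h_1, h_2) > 2r$, because otherwise both would lie in $B_{h_1}(2r) \cap \spc(r)$. In particular $d(u, v) > 2r$. I then pick any path $u = p_0, p_1, \ldots, p_k = v$ in $G_{\leq 2r}$ and set $i = \max\{j : d(u, p_j) \leq 2r\}$. Since each edge of this path has length at most $2r$, the triangle inequality gives $d(u, p_1) \leq 2r$ and so $i \geq 1$, while $d(u, v) > 2r$ forces $i < k$. Thus $p_i \in B_u(2r)$, $p_{i+1} \notin B_u(2r)$, and the edge $\{p_i, p_{i+1}\}$ has some length $\ell \leq 2r$.

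The argument then splits on $d(u, p_i)$. In the easy case $d(u, p_i) \leq r$, the triangle inequality yields $\ell \geq d(u, p_{i+1}) - d(u, p_i) > r$, so $\ell \in (r, 2r]$; the edge, being a shortest path of this length, lies in $\mc{P}_{(r, 2r]}$ and must be hit by $\spc(r)$. Either $p_i \in \spc(r)$ or $p_{i+1} \in \spc(r)$, and either places a second hub inside $B_{p_i}(2r)$ together with $u$, since $p_i \neq u$ (using $i \geq 1$) and $p_{i+1} \neq u$ (using $d(u, p_{i+1}) > 2r$). This yields the required contradiction.

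The main obstacle will be the remaining subcase $d(u, p_i) \in (r, 2r]$ with $\ell \leq r$, in which the short boundary edge puts no direct demand on $\spc(r)$ and neither endpoint is immediately forced into $\spc(r)$. My plan here is to rerun the same cumulative-distance walk on a shortest $G$-path $u = q_0, \ldots, q_s = p_{i+1}$, whose total length lies in $(2r, 3r]$, and choose $j = \max\{j' : d(u, q_{j'}) \leq 2r\}$. If $d(u, q_j) \leq r$, the triangle inequality again forces the edge $\{q_j, q_{j+1}\}$ into $\mc{P}_{(r, 2r]}$ (edges of length greater than $2r$ are ruled out by the total-length bound of $3r$), and the easy-case reasoning above closes the argument. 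If instead $d(u, q_j) \in (r, 2r]$, then pairwise hub separation forces $q_1, \ldots, q_j$ to be non-hubs, and I would iterate the same argument with $q_{j+1}$ in place of $p_{i+1}$ while simultaneously running the symmetric analysis from $v$. Because every endpoint considered in the iteration lies in $B_u(3r)$ and there are only finitely many vertices, this process must terminate in a sub-case of the first kind and complete the contradiction.
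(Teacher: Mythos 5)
Your ``easy case'' ($\dist(u,p_i)\leq r$) is correct, but the ``main obstacle'' case is where the entire content of the lemma lies, and your plan for it cannot be completed. The fundamental problem is that you try to derive the contradiction at scale $r$ itself, i.e., to exhibit a ball $B_w(2r)$ containing two hubs of $\spc(r)$. This is impossible in general. Consider the path graph $u-a-b-v$ with edge lengths $2r,\,r,\,2r$: here $G_{\leq 2r}$ is connected, and $\spc(r)=\{u,v\}$ is a valid and locally $1$-sparse shortest path cover for scale $r$ (the only shortest paths with length in $(r,2r]$ are the two edges of length $2r$, and every ball of radius $2r$ contains at most one of $u,v$). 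So the scale-$r$ data alone is fully consistent with two hubs sharing a component, and no argument that only inspects $\spc(r)$ can reach a contradiction. (This graph does not have highway dimension $1$, but the violation occurs at a different scale, e.g.\ $r''=1.5r$, where the four shortest paths of length in $(1.5r,3r]$ cannot be covered in a locally $1$-sparse way.) Your iteration also stalls concretely on this example: one lands in the subcase $\dist(u,q_j)\in(r,2r]$ with $q_{j+1}=p_{i+1}$, so ``iterating with $q_{j+1}$ in place of $p_{i+1}$'' makes no progress and the finiteness argument buys nothing. A smaller slip in the same step: an edge of length greater than $2r$ on a path of total length at most $3r$ is \emph{not} ruled out (the prefix before it may be short), and such an edge need not be hit by $\spc(r)$.

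The missing idea is that the hypothesis of highway dimension $1$ provides a locally $1$-sparse cover at \emph{every} scale, and the proof must invoke a scale other than $r$. The paper takes a \emph{closest} pair of hubs $x,y$ in a component, shows every edge of a connecting path in $G_{\leq 2r}$ not incident to $x$ or $y$ has length at most $r$, finds a ``crossing'' edge $\{u,v\}$ between the vertices closer to $x$ and those closer to $y$, and then works at the new scale $r'$ defined by $2r'=\dist(x,u)+\dist(u,v)>2r$: the two vertex-disjoint shortest paths from $x$ to $u$ and from $v$ to $y$ both have length in $(r',2r']$ and both lie in $B_v(2r')$, so no shortest path cover for scale $r'$ can be locally $1$-sparse. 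Some such change of scale is unavoidable.
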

\begin{proof}
For the sake of contradiction, let $r \in \mathbb{R}^+$ and let $x,y\in\spc(r)$ 
be a closest pair of distinct hubs in some component of $G_{\leq 2r}$. Let 
further $P$ be a shortest path in $G_{\leq 2r}$ between $x$ and~$y$ using only 
edges of length at most~$2r$. (Note that $P$ need not be a shortest path between 
$x$ and $y$ in $G$.) In particular, there is no other hub from 
$\spc(r)\setminus\{x,y\}$ along $P$. This implies that every edge of~$P$ that is 
not incident to either $x$ or~$y$ must be of length at most $r$, since otherwise 
the edge would be a shortest path of length $(r,2r]$ between its endpoints 
(using that $G$ is metric) contradicting the fact that $\spc(r)$ is a shortest 
path cover for scale $r$. 

Since the highway dimension of $G$ is $1$, any ball $B_w(2r)$ around a vertex $w 
\in V(P)$ contains at most one of the hubs $x,y \in \spc(r)$. Let $x',y' \in P$ 
be the vertices indicent to $x$ and $y$ along $P$, respectively. Since the 
length of the edge $\{x,x'\}$ is at most~$2r$, the ball~$B_{x'}(2r)$ must 
contain~$x$ and, by the observation above, it cannot contain~$y$ (in particular 
$\{x,y\}$ is not an edge). Symmetrically, the ball $B_{y'}(2r)$ contains~$y$ 
but not~$x$. Consequently, $x' \neq y'$ and neither of these two vertices can be 
a hub of scale~$r$, i.e., the path $P$ contains at least two vertices different 
from $x$ and $y$. 

Let $V_x = \{w \in V : \dist(x,w) < \dist(y,w) \}$ contain all vertices closer 
to $x$ than to $y$, where $\dist(\cdot,\cdot)$ refers to the distance in the 
original graph~$G$. As all edge weights are strictly positive, we have that 
$\dist(x,y) > 0$ and thus~$y \notin V_x$. Since $P$ starts with vertex $x\in 
V_x$ and ends with vertex $y\notin V_x$ we deduce that there is an edge 
$\{u,v\}$ of $P$ such that $u\in V_x$ and $v\notin V_x$. In particular, 
$\dist(x,u)<\dist(y,u)$ and $\dist(y,v)\leq \dist(x,v)$. We must have~$\{u,v\} 
\neq \{y',y\}$, since otherwise $\dist(x,y') < \dist(y,y') \leq 2r$ and hence 
$B_{y'}(2r)$ would contain~$x$. Similarly, we have $\{u,v\} \neq \{x,x'\}$, 
since otherwise $B_{x'}(2r)$ would contain~$y$.
Note that, by definition, $u \neq y$ and $v \neq x$, and hence $x,y \notin 
\{u,v\}$. Consequently, since every edge of~$P$ not incident to either $x$ 
or~$y$ must have length at most $r$, we conclude that $\{u,v\}$ has length at 
most~$r$.

Finally, consider the scale $r' \in \mathbb{R}^+$, defined such that 
$2r'=\dist(x,u)+\dist(u,v)$. Let $Q$ and $Q'$ denote shortest paths between 
$x,u$ and $v,y$ in~$G$, respectively. Then the ball $B_v(2r')$ around $v$ 
contains $Q$ by definition of~$r'$. From $\dist(y,v) \leq \dist(x,v) \leq 
\dist(x,u) + \dist(u,v) = 2r'$ it follows that $B_v(2r')$ contains $Q'$ as well. 
Also, $\dist(y,v) \leq \dist(x,v)$ means that~$B_{v}(2r)$ cannot contain~$x$, 
and hence $2r' = \dist(x,u) + \dist(u,v) \geq \dist(x,v) > 2r$, which 
implies~$r' > r$. W.l.o.g., assume that~$\dist(x,u) \leq \dist(v,y)$ (otherwise 
consider scale $2r' = \dist(y,v) + \dist(u,v)$ and the ball~$B_u(2r')$). Our 
earlier observation that~$\dist(u,v) \leq r$ with~$r < r'$ then 
yields~$\dist(v,y) \geq \dist(x,u) = 2r' - \dist(u,v) > r'$. In other words, the 
lengths of both paths $Q$ and $Q'$ are in~$(r',2r']$, and so they both need to 
contain a hub of~$\spc(r')$. However, by definition of $u,v$, the paths~$Q$ and 
$Q'$ are vertex disjoint, which means that the ball $B_v(2r')$, which contains 
$Q$ and $Q'$, also contains at least two hubs from $\spc(r')$. This is a 
contradiction with~$G$ having highway dimension~$1$.
\qed\end{proof}

Given a graph $G$, we now consider graphs $G_{\leq 2r}$ for exponentially 
growing scales. In particular, for any integer $i\geq 0$ we define the scale 
$r_i=2^i$ and call a connected component of $G_{\leq 2r_i}$ a \emph{level-$i$ 
component}. Note that the level-$i$ components partition the graph $G$, and that 
the level-$i$ components are a \emph{refinement} of the level-$(i+1)$ 
components, i.e., every level-$i$ component is contained in some level-$(i+1)$ 
component. W.l.o.g., we scale the edge weights of the graph such that $\min_{e 
\in E} w(e) = 3$, so that there are no edges on level~$0$, and every level-$0$ 
component is a singleton. Let $\alpha=\frac{\max_{u\neq v}\dist(u,v)} 
{\min_{u\neq v}\dist(u,v)} = \frac{\max_{u\neq v}\dist(u,v)} {3}$ be the aspect 
ratio of~$G$. In our applications we may assume that $G$ is connected, so that 
there is exactly one level-$(1+\lceil\log_2(\alpha)\rceil)$ component containing
all of $G$. 

Since every edge is a shortest path between its endpoints, every edge $e = 
\{u,v\}$ that connects a vertex $u$ of a level-$i$ component $C$ with a vertex 
$v$ outside $C$ is hit by a hub of $\spc(r_j)$, where $j$ is the level for 
which $w(e) \in (r_j,2r_j]$. Moreover, since~$v$ lies outside~$C$, we have $w(e) 
> 2r_i$ and, thus, $j\geq i+1$. 
The following definition captures the set of the hubs through which edges can 
possibly leave $C$. 
\begin{dfn}
	Let $C$ be a level-$i$ component of~$G$. We define the set of 
\emph{interface points} of~$C$ as
	$I_C := \bigcup_{j \geq i} \{u \in \spc(r_j) : \dist_C(u) \leq 
2r_j\}, $
	where $\dist_C(u)$ denotes the minimum distance from $u$ to a vertex in 
$C$ (if $u \in C$, $\dist_C(u) = 0$).
\end{dfn}

Note that, for technical reasons, we explicitly add every hub at level $i$ of a 
component to its set of interface points as well, even if such a hub does not 
connect the component with any vertex outside at distance more than $2r_i$. 

\begin{lem}\label{obs_interface}
If~$G$ has highway dimension~$1$, then each interface~$I_C$ of a level-$i$ 
component~$C$ contains at most one hub for each level $j\geq i$.
\end{lem}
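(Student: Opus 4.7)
The plan is to reduce the claim to \cref{lem:hd1-G2r} by showing that any two interface hubs of $\spc(r_j)$ with $j\geq i$ must lie in a common connected component of $G_{\leq 2r_j}$. I would argue by contradiction: suppose that for some level $j\geq i$ the interface $I_C$ contains two distinct hubs $x,y\in\spc(r_j)$. By the definition of $I_C$, there exist vertices $v_x,v_y\in C$ with $\dist(x,v_x)\leq 2r_j$ and $\dist(y,v_y)\leq 2r_j$ (with $v_x=x$ or $v_y=y$ in case a hub lies in $C$ itself).

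The key observation I would invoke is that any shortest path in $G$ of total length at most $2r_j$ uses only edges of length at most $2r_j$, and hence lies entirely in $G_{\leq 2r_j}$. Applying this to shortest paths from $x$ to $v_x$ and from $y$ to $v_y$ places $x$ in the same connected component of $G_{\leq 2r_j}$ as $v_x$, and $y$ in the same connected component as $v_y$. Now, since $j\geq i$, the level-$i$ component $C$ is contained in a single level-$j$ component $C'$, so $v_x,v_y\in C'$ and therefore $x,y\in C'$. But then $C'$ is a single connected component of $G_{\leq 2r_j}$ containing two distinct hubs of $\spc(r_j)$, contradicting \cref{lem:hd1-G2r}.

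There is no real obstacle here: the statement is essentially a lifting of \cref{lem:hd1-G2r} from hubs in a single level-$j$ component to hubs in the interface of a finer level-$i$ component, and the only ingredient beyond the previous lemma is the trivial observation about edge lengths along a short path. In particular, no case distinction between hubs inside and outside $C$ is needed, since both situations are handled uniformly by the distance bound in the definition of $I_C$.
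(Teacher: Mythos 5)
Your proof is correct and follows essentially the same route as the paper's: both arguments place the two hypothetical hubs of $\spc(r_j)$ into the single level-$j$ component containing $C$ and then invoke \cref{lem:hd1-G2r}. Your explicit observation that a shortest path of length at most $2r_j$ lies entirely in $G_{\leq 2r_j}$ is exactly the step the paper compresses into ``$u$ and $v$ are connected to $C$ with edges of length at most $2r_j$.''
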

\begin{proof}
Assume that there are two hubs $u,v\in\spc(r_j)$ in $I_C$, and recall that we 
preprocessed the graph so that the triangle inequality holds. Then $u$ and $v$ 
must be contained in the same level-$j$ component~$C'$, since $u$ and $v$ are 
connected to~$C$ with edges of length at most $2r_j$ (or are contained in $C$) 
and $C\subseteq C'$. This contradicts \cref{lem:hd1-G2r}.
\qed\end{proof}

Using level-$i$ components and their interface points we can prove that the 
treewidth of a graph with highway dimension $1$ is bounded in terms of the 
aspect ratio.

\begin{lem}\label{lem:tw}
If a graph $G$ has highway dimension $1$ and aspect ratio $\alpha$, its 
treewidth is at most~$1+\lceil\log_2(\alpha)\rceil$.
\end{lem}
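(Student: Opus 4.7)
The plan is to build a tree decomposition $D$ whose nodes correspond to the distinct connected components of $G_{\leq 2r_i}$ across all levels $i = 0, 1, \ldots, 1 + \lceil \log_2 \alpha \rceil$, organized by inclusion into a laminar family (leaves are singleton level-$0$ components, the root is the top-level component containing all of $G$). For each distinct component $C$ with smallest associated level $i(C)$, I would define the bag $X_C := I_C$, and augment each leaf bag by $X_{\{v\}} := I_{\{v\}} \cup \{v\}$. Since the scaling $\min_e w(e) = 3$ ensures that no shortest path has length in $(1, 2]$, I may choose $\spc(r_0) = \emptyset$, which will be useful for trimming the width by exactly one.

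Conditions (a) and (b) of \cref{dfn:treewidth} and the width bound follow quickly. Each $v \in V$ lies in its own leaf bag, giving (a). For (b), an edge $e = \{u, w\}$ with $w(e) \in (r_j, 2r_j]$ is a shortest path in $\mc{P}_{(r_j, 2r_j]}$ (by the triangle-inequality preprocessing), so w.l.o.g.\ $u \in \spc(r_j)$; then $\dist_{\{w\}}(u) \leq w(e) \leq 2r_j$ implies $u \in I_{\{w\}}$, so $\{u,w\} \subseteq X_{\{w\}}$. For the width, \cref{obs_interface} bounds $|I_C|$ by one hub per level $j \geq i(C)$; since $\spc(r_0)$ contributes nothing, $|I_C| \leq 1 + \lceil \log_2 \alpha \rceil$, and the leaf augmentation gives $|X_C| \leq 2 + \lceil \log_2 \alpha \rceil$, yielding the desired width.

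The main obstacle is the subtree condition (c): for each $v \in V$, the set $\mc{S}_v := \{C : v \in X_C\}$ must induce a connected subtree of $D$. I would set $J_v := \{j : v \in \spc(r_j)\}$ and $j^* := \max J_v$ (with $j^* = -\infty$ if $J_v = \emptyset$). For a component $B$ containing $v$, $\dist_B(v) = 0$, so $v \in I_B$ iff $i(B) \leq j^*$; this contributes a prefix of the root-ward path from $\{v\}$ to $\mc{S}_v$. For an off-chain component $C \in \mc{S}_v$ (with $v \notin C$), fix a witness $j_C \in J_v$ with $j_C \geq i(C)$ and $\dist_C(v) \leq 2r_{j_C}$. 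Since this shortest $v$-to-$C$ path uses only edges of weight at most $2r_{j_C}$, both $v$ and $C$ lie in the same level-$j_C$ component, which is therefore on $v$'s root-ward chain (and its smallest-appearance level is at most $j_C \leq j^*$). For every ancestor $A$ of $C$ on the tree-path from $C$ to this common level-$j_C$ component, one has $\dist_A(v) \leq \dist_C(v) \leq 2r_{j_C}$ and $i(A) \leq j_C$, hence $v \in I_A$. Thus each off-chain branch of $\mc{S}_v$ is a path in $D$ attaching to $v$'s chain, proving $\mc{S}_v$ is connected and completing the proof.
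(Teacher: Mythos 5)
Your proposal is correct and takes essentially the same route as the paper: the same laminar decomposition by components of $G_{\leq 2r_i}$ with bags given by interface points (plus the vertex itself at leaves), the same use of \cref{obs_interface} and the emptiness of $\spc(r_0)$ for the width bound, and an equivalent (merely reorganized, via the on-chain prefix plus off-chain ancestor paths) verification of Property~\eqref{item:tw-vertices}. The only cosmetic difference is that you merge duplicate nodes for a component that persists across several levels, which changes nothing.
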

\begin{proof}
The tree decomposition of $G$ is given by the refinement property of level-$i$ 
components. That is, let $D$ be a tree that contains a node $v_C$ for every 
level-$i$ component $C$ for all levels $0\leq i\leq 
1+\lceil\log_2(\alpha)\rceil$. For every node $v_C$ we add an edge in~$D$ to 
node $v_{C'}$, if $C$ is a level-$i$ component, $C'$ is a level-$(i+1)$ 
component, and $C\subseteq C'$. The bag $X_C$ for node $v_C$ contains the 
interface points $I_C$. For a level-$0$ component $C$ the bag $X_C$ additionally 
contains the single vertex $u$ contained in $C$.

Clearly, the tree decomposition has Property~\eqref{item:tw-union} of 
\cref{dfn:treewidth}, since the level\hy{}$0$ components partition the vertices 
of $G$ and every vertex of $G$ is contained in a bag $X_C$ corresponding to a 
level-$0$ component $C$. Also, Property~\eqref{item:tw-edges} is given by the 
bags $X_C$ for level-$0$ components $C$, since for every edge $e$ of $G$ one of 
its endpoints $u$ is a hub of $\spc(r_i)$ where $i$ is such that $w(e) \in 
(r_i,2r_i]$, and the other endpoint $w$ is contained in a level-$0$ component 
$C$, for which~$X_C$ contains $u$ and $w$. 

For Property~\eqref{item:tw-vertices}, first consider a vertex $u$ of $G$, 
which is not contained in any set of interface points for any level-$i$ 
component and any $0\leq i\leq \log_2(\alpha)$. Such a vertex only appears in 
the bag $X_C$ for the level-$0$ component $C$ containing $u$, and thus the node 
$v_C$ for which the bag contains~$u$ trivially induces a connected subtree of 
$D$. 

Any other vertex $u$ of $G$ is an interface point. Let~$i$ be the highest level 
for which $u\in I_C$ for some level-$i$ component $C$. We claim that $u \in C$, 
which implies that $C$ is the unique level-$i$ component containing~$u$ in its 
interface. To show our claim, assume $u \notin C$. Then, by definition, $I_C$ 
contains~$u$ because $u\in\spc(r_j)$ for some $j\geq i$ and $u$ has some 
neighbour at distance at most~$2r_j$ in $C$. Since we preprocessed the graph 
such that every edge is a shortest path between its endpoints, this means that 
there must be an edge~$e = \{u,v\}$ with $w(e) \in (r_j,2r_j]$ and $v \in C$. 
Since $u \notin C$, we have $i < j$. Let~$C'$ be the unique level-$j$ component 
with~$C \subseteq C'$. Then, by definition, $u \in I_{C'}$, which contradicts 
the maximality of~$i$. This proves our claim and shows that the highest level 
component~$C$ with $u \in X_C$ is uniquely defined. Moreover, we obtain $u \in 
\spc(r_i)$.

Now consider a level-$i'$ component $C'$ with $i' < i$, such that $u \in 
X_{C'}$, and let $C''$ be the unique level-$(i'+1)$ component containing~$C'$. 
We claim that $u \in X_{C''}$. If $u \in C' \subseteq C''$, then $u \in 
X_{C''}$, since~$u \in \spc(r_i)$, $\dist_{C''}(u) = 0 \leq 2r_i$ and $i'+1 
\leq i$. If $u \notin C'$, then $u \in X_{C'}$ implies $u \in I_{C'}$, which 
means that there must be a vertex~$w \in C'$ with~$\dist(u,w) \leq 2 r_i$. But 
then $w \in C''$ and thus $\dist_{C''}(u) \leq 2r_i$. Together with~$u \in 
\spc(r_i)$, this implies~$u \in X_{C''}$, as claimed. Since~$v_{C'}$ is a child 
of $v_{C''}$ in the tree~$D$, it follows inductively that the nodes of $D$ with 
bags containing $u$ induce a subtree of $D$ with root $v_C$, which establishes 
Property~\eqref{item:tw-vertices}.

By \cref{obs_interface} each set of interface points contains at most one hub of 
each level. Since all edges have length at least~$3$, there are no hubs in 
$\spc(r_0)$ on level $0$. This means that each bag of the tree decomposition 
contains at most $1+\lceil\log_2(\alpha)\rceil$ interface points. The bags for 
level-$0$ components contain one additional vertex. Thus the treewidth of $G$ is 
at most $1+\lceil\log_2(\alpha)\rceil$, as claimed.
\qed\end{proof}

An additional property that we will exploit for our algorithms is the 
following. 
A \emph{$(\mu,\delta)$-net} $N\subseteq V$ is a subset of vertices such that 
(a)~the distance between any two distinct \emph{net points} $u,w\in N$ is more 
than~$\mu$, and (b)~for every vertex $v\in V$ there is some net point $w\in N$ 
at distance at most~$\delta$. For graphs of highway dimension~$1$ however, we 
can obtain nets with additional favourable properties, as the next lemma shows.

\begin{lem}\label{lem:net}
For any graph $G$ of highway dimension $1$ and any $r>0$, there is an 
$(r,3r)$-net such that every connected component of $G_{\leq r}$ contains 
exactly one net point. Moreover this net can be computed in polynomial time.
\end{lem}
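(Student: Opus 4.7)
The plan is to define the net by selecting one arbitrary vertex from each connected component of $G_{\leq r}$. Computing these components via graph search restricted to edges of weight at most $r$ runs in polynomial time, and picking a representative per component is immediate, which establishes the polynomial-time claim. For the separation property of the $(r,3r)$-net, suppose two distinct net points $c,c'$ from distinct components had $\dist(c,c')\leq r$: then the shortest $G$-path between them would have total weight at most $r$, so every edge on it would have weight at most $r$, placing $c$ and $c'$ in the same $G_{\leq r}$-component, a contradiction.

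For the covering property, it suffices to prove the structural claim that every $G_{\leq r}$-component $C$ has $G$-diameter at most $3r$; the chosen representative then lies within $G$-distance $3r$ of every vertex of $C$. To prove this, fix $u,v\in C$ with $\dist(u,v) > r/2$ (smaller distances being trivial) and walk along any $G_{\leq r}$-path $u=x_0,x_1,\ldots,x_k=v$, setting $d_i := \dist(u,x_i)$. Since $d_0 = 0$, $d_k > r/2$, and $|d_{j+1}-d_j|$ is bounded by the weight of the edge $\{x_j,x_{j+1}\}$ (which is at most $r$), the smallest index $i$ with $d_i > r/2$ exists and satisfies $d_i \in (r/2, 3r/2]$. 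If $d_i \in (r/2,r]$, then the shortest $G$-path from $u$ to $x_i$ has total length at most $r$ and hence lies in $G_{\leq r}\cap C$, so $\spc(r/2)$ must contain a hub on it at $G$-distance at most $r$ from $u$. Otherwise $d_i\in(r,3r/2]$, in which case $d_i-d_{i-1}>r/2$ forces the single edge $\{x_{i-1},x_i\}$ to have weight in $(r/2,r]$; after the preprocessing this edge is itself a shortest path, so $\spc(r/2)$ must include one of its endpoints, both of which lie in $C$ at $G$-distance at most $3r/2$ from $u$. A symmetric walk from $v$ yields a hub in $\spc(r/2)\cap C$ within $G$-distance $3r/2$ of $v$; Lemma~\ref{lem:hd1-G2r} applied at scale $r/2$ --- which says that every $G_{\leq r}$-component contains at most one hub of $\spc(r/2)$ --- forces the two hubs to coincide. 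The triangle inequality then gives $\dist(u,v)\leq 3r$, as required.

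The most delicate step is the jump case, where $d_i$ skips the interval $(r/2,r]$ completely. Its resolution relies on the invariant that after the preprocessing every edge is a shortest path between its endpoints, so that the single jumping edge of weight in $(r/2,r]$ is itself a shortest path to be hit by $\spc(r/2)$ at one of its endpoints. Combined with the uniqueness granted by Lemma~\ref{lem:hd1-G2r} at scale $r/2$, this pins the hub down inside $C$ and close to both $u$ and $v$, yielding the diameter bound and therefore the covering property.
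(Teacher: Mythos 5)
Your proof is correct and follows essentially the same route as the paper's: pick one arbitrary representative per component of $G_{\leq r}$, get separation from the component structure, and get covering by showing each component has diameter at most $3r$ via the unique $\spc(r/2)$-hub guaranteed by \cref{lem:hd1-G2r} at scale $r/2$, with every vertex within $3r/2$ of it. The only (cosmetic) difference is that you locate the hub by walking along a $G_{\leq r}$-path and analysing the first edge crossing distance $r/2$, whereas the paper walks along a shortest path towards the hub; both hinge on the same crossing-edge case distinction.
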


\begin{proof}
We first derive an upper bound of $3r$ for the diameter of any connected 
component of $G_{\leq r}$. \cref{lem:hd1-G2r} implies that a connected component 
$C$ contains at most one hub $x$ of $\spc(r/2)$. By definition, any shortest 
path in $C$ of length in $(r/2,r]$ must pass through $x$. We also know that 
every edge of $C$ has length at most $r$. Consequently, every edge in $C$ not 
incident to~$x$ must have length at most $r/2$, since each edge constitutes a 
shortest path between its endpoints. This implies that any shortest path in $C$ 
that is not hit by $x$ must have length at most~$r/2$: if $C$ contains a 
shortest path $P$ with length more than $r/2$ not containing~$x$ we could 
repeatedly remove edges of length at most $r/2$ from $P$ until we obtain a 
shortest path of length in $(r/2,r]$ not hit by~$x$, a contradiction. 
Now consider a shortest path $P$ in $G$ of length more than $r/2$ from some 
vertex $v \in C$ to $x$ (note that this path may not be entirely contained in 
$C$). Let $\{u,w\}$ be the unique edge of $P$ such that $\dist(v,u) \leq r/2$ 
and $\dist(v,w) > r/2$.
If the length of the edge $\{u,w\}$ is at most $r/2$ then $\dist(v,w) \leq r$, 
and thus $w=x$, since the part of the path from $v$ to $w$ is a shortest path 
of length in $(r/2,r]$ and thus needs to pass through~$x$. Otherwise the length 
of the edge $\{u,w\}$ is in the interval $(r/2,r]$, which again implies $w=x$, 
since the edge must contain~$x$. In either case, $\dist(v,x) \leq 3r/2$. This 
implies that every vertex in $C$ is at distance at most $3r/2$ from $x$, and 
thus the diameter of $C$ is at most~$3r$.

To compute the $(r,3r)$-net, we greedily pick an arbitrary vertex of each 
connected component of~$G_{\leq r}$. As the distances between components of 
$G_{\leq r}$ is greater than $r$, and every vertex lies in some component 
containing a net point, we get the desired distance bounds. Clearly this net 
can be computed in polynomial time.
\qed\end{proof}

\section{Approximation schemes}
\label{sec:scheme}

In general the aspect ratio of a graph may be exponential in the input size. A 
key ingredient of our algorithms is to reduce the aspect ratio~$\alpha$ of the 
input graph~$G=(V,E)$ to a polynomial. For \pname{STP} and TSP, standard 
techniques can be used to reduce the aspect ratio to $O(n/\eps)$ when aiming 
for a $(1+\eps)$\hy{}approximation. This was for instance also used 
in~\cite{feldmann2018} for low highway dimension graphs, but here we need to 
take special care not to destroy the structural properties given by 
\cref{lem:tw} in this process. In particular, we need to reduce the aspect ratio 
and maintain the fact that the treewidth is bounded.

Therefore, we reduce the aspect ratio of our graphs by the following 
preprocessing. Both metric \pname{TSP} and \pname{STP} admit constant factor 
approximations in polynomial time using well-known 
algorithms~\cite{byrka2010,Chr76TSP}. We first compute a solution of cost~$c$ 
using a $\beta$-approximation algorithm for the problem at hand (\pname{TSP} or 
\pname{STP}). For \pname{TSP}, the diameter of the graph $G$ clearly is at 
most~$c/2$. 
For \pname{STP} we remove every vertex of $V$ that is at distance more than $c$ 
from any terminal, since such a vertex cannot be part of the optimum solution.  
After having removed all such vertices in this way, we obtain a graph $G$ of 
diameter at most~$3c$. Thus, in the following, we may assume that our graph $G$ 
has diameter at most $3c$. We then set $r=\frac{\eps c}{3n}$ in \cref{lem:net} 
to obtain a $(\frac{\eps c}{3n},\frac{\eps c}{n})$-net $N\subseteq V$. As a 
consequence the metric induced by~$N$ (with distances of $G$) has aspect ratio 
at most $\frac{3c}{\eps c/(3n)}=O(n/\eps)$, since the minimum distance between 
any two net points of $N$ is at least $\frac{\eps c}{3n}$ and the maximum 
distance is at most $3c$.
We will exploit this property in the following.

By \cref{lem:net}, each connected component of $G_{\leq\frac{\eps c}{3n}}$ 
contains exactly one net point of $N$. 
Let $\eta\colon V\mapsto N$ map each vertex of $G$ to the unique net point in 
the same connected component of $G_{\leq\frac{\eps c}{3n}}$. 
We define a new graph~$G'$ with vertex set $N \subseteq V$ and edge 
set~$\{\{\eta(u),\eta(v)\}: \{u,v\} \in E \land \eta(u)\neq\eta(v)\}$. 
The length of each edge $\{w,w'\}$ of $G'$ is the shortest path distance between 
$w$ and $w'$ in $G$. This new graph~$G'$ may not have bounded highway dimension, 
but we claim that it has treewidth $O(\log(n/\eps))$. 

\begin{lem}\label{lem:tw2}
If~$G$ has highway dimension~$1$, the graph $G'$ with vertex set $N$ has 
treewidth $O(\log(n/\eps))$. Moreover, a tree decomposition for $G'$ of width 
$O(\log(n/\eps))$ can be computed in polynomial time.
\end{lem}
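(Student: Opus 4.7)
The plan is to build a tree decomposition of $G'$ by pushing the tree decomposition of $G$ provided by~\cref{lem:tw} forward under the net projection $\eta$. Concretely, I would start from the decomposition with bags $(X_C)_C$ constructed in the proof of~\cref{lem:tw}, keep the same underlying tree, and replace every bag $X_C$ by its image $X'_C := \eta(X_C) = \{\eta(v) : v \in X_C\}$. It then remains to verify the tree decomposition axioms for $G'$ and to bound the sizes of the new bags by $O(\log(n/\eps))$.

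Axioms~\eqref{item:tw-union} and~\eqref{item:tw-edges} of~\cref{dfn:treewidth} transfer essentially for free: every $w \in N$ is fixed by $\eta$ and lies in some bag of the original decomposition, hence in the corresponding $X'_C$; and every edge of $G'$ is the $\eta$-image of an edge of $G$, which by~\eqref{item:tw-edges} for the original decomposition lies in some $X_C$ and therefore contributes both of its endpoints to $X'_C$. For the subtree property~\eqref{item:tw-vertices} I would use the key observation that, by~\cref{lem:net}, the preimage $\eta^{-1}(w)$ is precisely the connected component of $w$ in $G_{\leq \eps c/(3n)}$, and is therefore connected in $G$. Taking a spanning tree of $\eta^{-1}(w)$, consecutive vertices share a bag of the original decomposition by~\eqref{item:tw-edges}, so the individual subtrees $\{v_C : v \in X_C\}$ pairwise intersect and their union $\{v_C : w \in X'_C\}$ is again a connected subtree.

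The main obstacle is the width bound, since the original bags can have size $\Theta(\log \alpha)$, which is not polynomially bounded in general. The key point is that short-scale hubs collapse under~$\eta$. Let $i^*$ be the largest integer with $2r_{i^*} \leq \eps c/(3n)$; since $\alpha = O(c)$ and by maximality $r_{i^*} > \eps c/(12n)$, we have $1 + \lceil \log_2(\alpha) \rceil - i^* = O(\log(n/\eps))$. For any level-$i$ component $C$ with $i \leq i^*$, every edge inside $C$ has length at most $2r_i \leq \eps c/(3n)$, so $C$ lies in a single connected component of $G_{\leq \eps c/(3n)}$ and $\eta$ is constant on $C$. Moreover, any hub $u \in I_C$ at level $j \leq i^*$ satisfies $\dist_C(u) \leq 2r_j \leq \eps c/(3n)$, so the shortest $u$-to-$C$ path uses only edges of length at most $\eps c/(3n)$ and places $u$ in the same $G_{\leq \eps c/(3n)}$ component as $C$, giving $\eta(u) = \eta(C)$. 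All these hubs therefore identify with $\eta(C)$ in $X'_C$, and the only remaining contributions come from hubs of $I_C$ at levels $j > i^*$; by~\cref{obs_interface} there is at most one such hub per level, so $|X'_C| = O(\log(n/\eps))$. For components at level $i > i^*$ the bound is immediate, as their interfaces only contain hubs at levels above~$i^*$.

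Polynomial-time computability is then essentially free: the decomposition of~\cref{lem:tw}, the net from~\cref{lem:net}, and the map $\eta$ can all be produced in polynomial time, and forming the images of the bags is trivially polynomial.
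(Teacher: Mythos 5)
Your proposal matches the paper's proof essentially step for step: the same push-forward of the decomposition from \cref{lem:tw} under $\eta$, the same argument that Property~\eqref{item:tw-vertices} survives because each fibre $\eta^{-1}(w)$ is a connected component of $G_{\leq \eps c/(3n)}$ (and hence its bags span a connected subtree), and the same width bound obtained by collapsing all interface hubs at levels $j$ with $2r_j\leq \eps c/(3n)$ to a single net point, leaving $O(\log(n/\eps))$ higher levels. The only point you gloss over is that \cref{lem:tw} is purely an existence statement, so the claimed polynomial-time constructivity additionally requires computing the locally $1$-sparse shortest path covers, which the paper justifies via the XP algorithm of Feldmann et al.\ for highway dimension~$1$.
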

\begin{proof}
We construct a tree decomposition~$D'$ of~$G'$ as follows. Following
\cref{lem:tw} we can compute a tree decomposition~$D$ of width at most 
$1+\lceil\log_2(\alpha)\rceil$, where $\alpha$ is the aspect ratio of~$G$: for 
this we need to compute a locally $1$-sparse shortest path cover $\spc(r_i)$ for 
each level $i$, which can be done in polynomial time via an XP 
algorithm~\cite{feldmann2018} if the highway dimension is~$1$. We then find the 
level-$i$ components and their interface points, from which the tree 
decomposition~$D$ and its bags can be constructed. Since there are 
$O(\log\alpha)$ levels and $\alpha$ is at most exponential in the input size 
(which includes the encoding length of the edge weights), we can compute $D$ 
in polynomial time. 

We construct~$D'$ from~$D$ by replacing every bag~$X$ of~$D$ by a new bag 
$X'=\{\eta(v) : v\in X\}$ containing the net points for the vertices in~$X$. It 
is not hard to see that Properties~\eqref{item:tw-union} 
and~\eqref{item:tw-edges} 
of \cref{dfn:treewidth} are fulfilled by $D'$, since they are true for~$D$. For 
Property~\eqref{item:tw-vertices}, note that for any edge $\{u,v\}$ of $G$, the 
set of all bags of $D$ that contain $u$ or $v$ form a connected subtree of $D$. 
This is because the bags containing $u$ form a connected subtree 
(Property~\eqref{item:tw-vertices}), the same is true for $v$, and both these 
subtrees share at least one node labelled by a bag containing the edge $\{u,v\}$ 
(Property~\eqref{item:tw-edges}). Consequently, the set of all bags containing 
vertices of any connected subgraph of~$G$ form a connected subtree. In 
particular, for any connected component~$A$ of~$G_{\leq\frac{\eps c}{3n}}$, the 
set of bags of~$D$ containing at least one vertex of~$A$ form a connected 
subtree. This implies Property~\eqref{item:tw-vertices} for $D'$. Thus, $D'$ is 
indeed a tree decomposition of $G'$ according to \cref{dfn:treewidth}. Note that 
$D'$ can be computed in polynomial time.

To bound the width of $D'$, recall that a bag $X$ of the tree decomposition $D$ 
of $G$ contains the interface points $I_C$ of a level-$i$ component~$C$, in 
addition to one more vertex of $C$ on the lowest level~$i=0$. Each interface 
point is a hub from $\spc(r_j)$ at some level $j\geq i$ and is at distance at 
most $2r_j$ from $C$. In particular, if $2r_i\leq\frac{\eps c}{3n}$ then $C$ is 
a component of $G_{\leq 2r_i}\subseteq G_{\leq\frac{\eps c}{3n}}$, and all hubs 
of $I_C\cap\spc(r_j)$ for which $2r_j\leq\frac{\eps c}{3n}$ lie in the same 
connected component $A$ of~$G_{\leq\frac{\eps c}{3n}}$ as $C$. These hubs are 
therefore all mapped to the same net point $w$ in $A$ by $\eta$. In addition to 
$w$, the bag $X'=\{\eta(v) : v\in X\}$ resulting from $X$ and~$\eta$ contains at 
most one vertex for every level $j$ such that $2r_j>\frac{\eps c}{3n}$. As 
$r_j=2^j$, this condition is equivalent to $j>\log_2(\frac{\eps c}{3n})-1$. As 
there are $1+\lceil\log_2(\alpha)\rceil$ levels in total, there are 
$O(\log(\frac{\alpha n}{\eps c}))$ hubs in $X'$. This bound is obviously also 
valid in case $2r_i>\frac{\eps c}{3n}$. We preprocessed the graph $G$ so that 
its diameter is at most $3c$ and its minimum distance is $3$, which implies an 
aspect ratio $\alpha$ of at most $c$ for $G$. This means that every bag $X'$ 
contains $O(\log(n/\eps))$ vertices, and thus the claimed treewidth bound for 
$G'$ follows.
\qed\end{proof}

We are now ready to prove our main result.

\begin{proof}[of~\cref{thm:main}]
To solve \pname{TSP} or \pname{STP} on $G$ we first use the above reduction to 
obtain $G'$ and its tree decomposition~$D'$, and then compute an optimum 
solution for $G'$. For TSP, $G'$ is already a valid input instance, but for 
\pname{STP} we need to define a terminal set, which simply is $R'=\{\eta(v)\mid 
v\in R\}$ if $R$ is the terminal set of~$G$. \citet{bodlaender2013deterministic} 
proved that for both \pname{TSP} and \pname{STP} there are deterministic 
algorithms to solve these problems exactly in time $2^{O(t)}n$, given a tree 
decomposition of the input graph of width $t$. By \cref{lem:tw2} we can thus 
compute the optimum to $G'$ in time $2^{O(\log(n/\eps))}\cdot 
n=(n/\eps)^{O(1)}$. Afterwards, we convert the solution for $G'$ back to a 
solution for $G$, as follows. 

For \pname{TSP} we may greedily add vertices of $V$ to the tour on $N$ by 
connecting every vertex $v\in V$ to the net point $\eta(v)$. As the vertices $N$ 
of $G'$ form a $(\frac{\eps c}{3n},\frac{\eps c}{n})$-net of~$V$, this incurs an 
additional cost of at most $2\frac{\eps c}{n}$ per vertex, which sums up to at 
most $2\eps c$. Let $\OPT$ and $\OPT'$ denote the costs of the optimum tours in 
$G$ and $G'$, respectively. We know that $c \leq \beta\cdot\OPT$, since we used 
a $\beta$-approximation algorithm to compute~$c$. Furthermore, the optimum tour 
in $G$ can be converted to a tour in $G'$ of cost at most $\OPT$ by 
short-cutting, due to the triangle inequality. Thus $\OPT'\leq \OPT$, which 
means that the cost of the computed tour in $G$ is at most $\OPT'+2\eps c\leq 
(1+2\beta\eps) \OPT$.

Similarly, for \pname{STP} we may greedily connect a terminal~$v$ of~$G$ to the 
terminal $\eta(v)$ of~$G'$ in the computed Steiner tree in $G'$. This adds an 
additional cost of at most $\frac{\eps c}{n}$, which sums up to at most $\eps 
c$. Let now $\OPT$ and $\OPT'$ be the costs of the optimum Steiner trees in $G$ 
and $G'$, respectively. We may convert a Steiner tree $T$ in $G$ into a tree 
$T'$ in $G'$ by using edge $\{\eta(u),\eta(v)\}$ for each edge $\{u,v\}$ of $T$. 
Note that the resulting tree $T'$ contains all terminals of $G'$, since 
$R'=\{\eta(v)\mid v\in R\}$. As the vertices $N$ of $G'$ form a $(\frac{\eps 
c}{3n},\frac{\eps c}{n})$-net of~$V$, the cost of $T'$ is at most $\OPT+2\eps c$ 
if the cost of $T$ is $\OPT$ (by the same argument as used for the proof of 
\cref{lem:tw2}). As before, we know that $c\leq \beta\cdot\OPT$, and thus the 
cost of the computed Steiner tree in $G$ is at most $\OPT'+\eps c\leq \OPT+3\eps 
c\leq (1+3\beta\eps) \OPT$.

Hence we obtain FPTASs for both \pname{TSP} and STP, which compute 
$(1+\eps)$\hy{}approximations within a runtime that is polynomial in the input 
size and $1/\eps$.
\qed\end{proof}

We prove next that \pname{STP} is $\mathsf{NP}$-hard on graphs of highway 
dimension $1$, which means that the problem is weakly $\mathsf{NP}$-hard for 
these inputs (cf.~\cite{vazirani01book}). Whether \pname{TSP} is 
$\mathsf{NP}$-hard for such small highway dimension remains open, but we prove 
that it is for highway dimension $6$.

\section{Hardness of \pname{Steiner Tree} for highway dimension 1}
\label{sec:hard}

We present a reduction from the $\mathsf{NP}$-hard satisfiability problem 
(SAT)~\cite{GareyJohnson}, in which a 
Boolean formula~$\varphi$ in conjunctive normal form is given, and a satisfying 
assignment of its variables needs to be found.

\begin{proof}[of \cref{thm:STP-hard}]
For a given SAT formula $\varphi$ with $k$ variables and $\ell$ clauses we 
construct a graph $G_\varphi$ as follows 
(cf.~\cref{fig:hardnessSTP_clause}). For each variable $x$ we introduce a 
path $P_x=(t_x,u_x,f_x)$ with two edges of length $1$ each. The vertex $u_x$ is 
a terminal. Additionally we introduce a terminal $v_0$, which we call the 
\emph{root}, and add the edges $\{v_0,t_x\}$ and $\{v_0,f_x\}$ for every 
variable $x$. Every edge incident to $v_0$ has length $11$. For each clause 
$C_i$, where $i\in\{1,\ldots,\ell\}$, we introduce a terminal $v_i$ and add the 
edge $\{v_i,t_x\}$ for each variable $x$ such that $C_i$ contains $x$ as a 
positive literal, and we add the edge $\{v_i,f_x\}$ for each $x$ for which $C_i$ 
contains $x$ as a negative literal. Every edge incident to $v_i$ has 
length~$11^{i+1}$. Note that the edges incident to the root $v_0$ also have 
length $11^{i+1}$ for $i=0$.

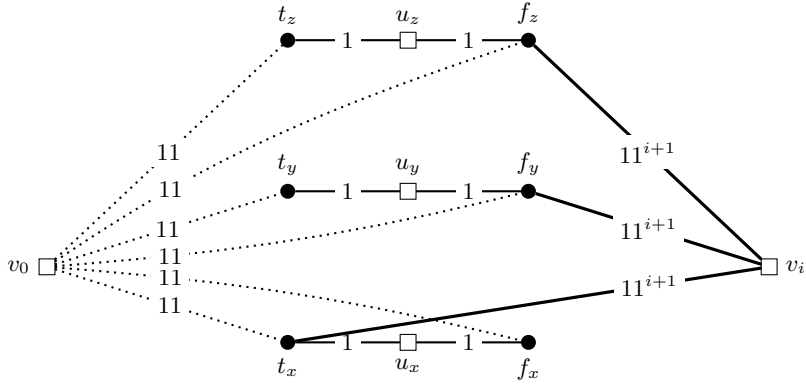
\begin{figure}
\begin{centering}
\begin{tikzpicture}[yscale=0.67,xscale=1.6]
\node[hnode,label=left:{$v_0$}] (v0) at (0,1.5) {};

\node[snode,label=below:{$t_x$}] (tx) at (2,0) {};
\node[hnode,label=below:{$u_x$}] (ux) at (3,0) {};
\node[snode,label=below:{$f_x$}] (fx) at (4,0) {};

\node[snode,label=above:{$t_y$}] (ty) at (2,3) {};
\node[hnode,label=above:{$u_y$}] (uy) at (3,3) {};
\node[snode,label=above:{$f_y$}] (fy) at (4,3) {};

\node[snode,label=above:{$t_z$}] (tz) at (2,6) {};
\node[hnode,label=above:{$u_z$}] (uz) at (3,6) {};
\node[snode,label=above:{$f_z$}] (fz) at (4,6) {};

\node[hnode,label=right:{$v_i$}] (vi) at (6,1.5) {};

\path[draw, arrows={-}, solid, thick] (tx) to [] node[pos=0.5, fill=white, inner sep = 2.5pt] {$1$} (ux);
\path[draw, arrows={-}, solid, thick] (fx) to [] node[pos=0.5, fill=white, inner sep = 2.5pt] {$1$} (ux);
\path[draw, arrows={-}, solid, thick] (ty) to [] node[pos=0.5, fill=white, inner sep = 2.5pt] {$1$} (uy);
\path[draw, arrows={-}, solid, thick] (fy) to [] node[pos=0.5, fill=white, inner sep = 2.5pt] {$1$} (uy);
\path[draw, arrows={-}, solid, thick] (tz) to [] node[pos=0.5, fill=white, inner sep = 2.5pt] {$1$} (uz);
\path[draw, arrows={-}, solid, thick] (fz) to [] node[pos=0.5, fill=white, inner sep = 2.5pt] {$1$} (uz);

\path[draw, arrows={-}, solid, very thick] (vi) to  node[pos=0.24, fill=white, inner sep = 2.5pt] {$11^{i+1}$} (tx);
\path[draw, arrows={-}, solid, very thick] (vi) to node[pos=0.5, fill=white, inner sep = 2.5pt] {$11^{i+1}$} (fy);
\path[draw, arrows={-}, solid, very thick] (vi) to node[pos=0.5, fill=white, inner sep = 2.5pt] {$11^{i+1}$} (fz);

\path[draw, arrows={-}, solid, dotted, thick] (v0) to node[pos=0.51, fill=white, inner sep = 2.5pt] {$11$} (tx);
\path[draw, arrows={-}, solid, dotted, thick,bend left=10] (v0) to node[pos=0.22, fill=white, inner sep = 2.5pt] {$11$} (fx);
\path[draw, arrows={-}, solid, dotted, thick] (v0) to node[pos=0.5, fill=white, inner sep = 2.5pt] {$11$} (ty);
\path[draw, arrows={-}, solid, dotted, thick, bend right = 10] (v0) to [] node[pos=0.22, fill=white, inner sep = 2.5pt,bend left=10] {$11$} (fy);
\path[draw, arrows={-}, solid, dotted, thick] (v0) to node[pos=0.5, fill=white, inner sep = 2.5pt] {$11$} (tz);
\path[draw, arrows={-}, solid, dotted, thick, bend left=10] (v0) to node[pos=0.28, fill=white, inner sep = 2.5pt] {$11$} (fz);

\end{tikzpicture}
\par\end{centering}
\caption{Illustration of the part of the construction involving 
vertices~$x,y,z$ and a clause~$C_i = (x \vee \bar{y} \vee \bar{z})$. Terminals 
are marked as boxes. \label{fig:hardnessSTP_clause}}
\end{figure}

\begin{lem}\label{lem:STP-hd1}
The constructed graph $G_\varphi$ has highway dimension $1$.
\end{lem}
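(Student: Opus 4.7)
The plan is to construct, for every scale $r>0$, a locally $1$-sparse shortest path cover $\spc(r)$, exploiting the fact that the edge lengths incident to $v_0, v_1, \ldots, v_\ell$ grow geometrically with ratio $11$. Write $L_i := 11^{i+1}$, so that $L_0 = 11$ and every edge incident to $v_i$ has length exactly $L_i$.

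The first step is a structural analysis of shortest paths in $G_\varphi$. Every edge of $G_\varphi$ is already a shortest path between its endpoints, since any detour adds cost at least $2$ or $2L_i$. The key distances are $\dist(t_x,u_x)=1$, $\dist(t_x,f_x)=2$, $\dist(v_0,t_x)=11$, $\dist(u_x,u_y)=24$ (for $x\ne y$, attained via $v_0$), and $\dist(v_i,w)\in[L_i,L_i+23]$ for every $w$ in the ``variable part'' (i.e., $\{t_x,u_x,f_x\}_x$) plus $v_0$. For $i\ge 0$, let $M_i$ denote the maximum length of a shortest path that contains $v_i$ but no $v_j$ with $j>i$. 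Then $M_0=24$, and inductively a geometric-series bound yields
\[
M_i \;\le\; L_i + L_{i-1} + L_{i-2} + \cdots + L_0 + O(1) \;\le\; \tfrac{11}{10}L_i + O(1) \;<\; L_{i+1}/2.
\]
This leads to the key structural claim: the set of shortest-path lengths of $G_\varphi$ is contained in the pairwise disjoint union $\{1,2\}\cup\bigcup_{i\ge 0}[L_i,M_i]$, and every shortest path of length in $[L_i,M_i]$ contains $v_i$. To see the latter: such a path must use some $v_k$-edge with $k\ge i$ (otherwise it lies in a subgraph of diameter at most $M_{i-1}<L_i$), and cannot use a $v_k$-edge with $k>i$ (that would force length at least $L_{i+1}>M_i$); moreover, for $i\ge 1$ the vertex $v_i$ appears as an endpoint, since $v_i$ internal would require length at least $2L_i>M_i$ and in any case $t_x-v_0-t_y$ of length $22$ always beats $t_x-v_i-t_y$ of length $2L_i$.

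Given the structural claim, I define $\spc(r)=\{u_x : x \text{ variable}\}$ if $r\in[1/2,2)$, $\spc(r)=\{v_i\}$ if $r\in[L_i/2,M_i)$ for some $i\ge 0$, and $\spc(r)=\emptyset$ otherwise. The separation $M_i<L_{i+1}/2$ together with $2<L_0/2=5.5$ ensures that these scale ranges are pairwise disjoint, so $\spc(r)$ is well defined. To check that $\spc(r)$ hits every shortest path of length in $(r,2r]$: since the length intervals $[L_i,M_i]$ are well separated, $(r,2r]$ intersects at most one of them. For $r\in[1/2,2)$ every shortest path in the range is either a variable edge $\{t_x,u_x\}$ or $\{u_x,f_x\}$ or the length-$2$ path $(t_x,u_x,f_x)$, each containing $u_x$; for $r\in[L_i/2,M_i)$ every shortest path of length in $(r,2r]$ lies in $[L_i,M_i]$ and hence contains $v_i$ by the structural claim.

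Finally, I verify local $1$-sparsity. The cases $\spc(r)=\emptyset$ and $\spc(r)=\{v_i\}$ are immediate. For $\spc(r)=\{u_x : x \text{ variable}\}$ with $r<2$, every ball $B_w(2r)$ has radius less than $4$, and since $\dist(u_x,u_y)=24$ for distinct variables $x\ne y$, the triangle inequality forces $|B_w(2r)\cap\{u_x : x \text{ variable}\}|\le 1$. The main technical obstacle I expect is the structural claim in the second paragraph: ruling out ``unexpected'' shortest paths whose lengths fall into the gaps $(M_i, L_{i+1}/2)$, which requires a careful case analysis of paths that mix edges of different levels, exploiting that any detour through a higher-level $v_i$ rather than $v_0$ incurs an extra cost controlled by the geometric growth of the edge weights.
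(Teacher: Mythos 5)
Your construction is essentially the paper's own: all the $u_x$ serve as hubs for small scales and the single vertex $v_i$ for large scales, justified by the geometric growth of the edge lengths at the $v_i$; the extra bookkeeping via the length intervals $[L_i,M_i]$ and the empty covers in the gap scales is a harmless reorganization of the same argument. One quantitative slip: the bound $M_i \le L_i + L_{i-1} + \dots + L_0 + O(1)$ is inconsistent with your own (correct) value $M_0 = 24 = 2L_0+2$, because a shortest path may pass \emph{through} a lower-level hub (in particular $v_0$) and thus use two of its incident edges; the correct bound is $M_i \le 2\sum_{j\le i} L_j + O(i) \le \tfrac{11}{5}L_i + O(i)$, which still gives $M_i < L_{i+1}/2 = \tfrac{11}{2}L_i$, so nothing downstream breaks.
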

\begin{proof}
Fix a scale $r>0$. If $r\leq 5$ then the shortest path cover $\spc(r)$ only 
needs to hit shortest paths of length at most $2r\leq 10$. Since all edges 
incident to terminals $v_j$ with $j \in \{0,\dots,\ell\}$ have length at 
least~$11$, any such path contains only edges of paths~$P_x$. Thus it suffices 
to include all vertices $u_x$ in~$\spc(r)$. A ball $B_w(2r)$ of radius $2r\leq 
10$ can also only contain some subset of vertices of a single path $P_x$, or a 
single vertex $v_j$. In the former case the ball contains at most the vertex 
$u_x\in\spc(r)$, and in the latter none of $\spc(r)$.

If $r> 5$, let $i=\lfloor \log_{11}(r/5)\rfloor\geq 0$ and $\spc(r)=\{v_i\}$. 
Since there is only one hub, this shortest path cover is locally $1$-sparse. 
Note that any edge incident to a vertex $v_j$ with $j\geq i+1$ has length at 
least $11^{i+2}\geq 11r/5> 2r$. Also, all paths that do not use any $v_j$ with 
$j\geq i$ have length at most $2+\sum_{j=0}^{i-1} (2\cdot 11^{j+1}+2)$, since 
such a path can contain at most two edges incident to a vertex $v_j$ with 
$j\leq i-1$ and the paths $P_x$ of length $2$ are connected only through edges 
incident to vertices~$v_j$. The length of such a path is thus shorter than
\[
2+2\frac{11^{i+1}}{11-1}+2i\leq 3\cdot 11^{i}+2\cdot 11^{i} \leq 5\cdot 11^{i} 
\leq r,
\]
where the first inequality holds since $i+1\leq 11^{i}$ whenever $i\geq 0$. 
Hence the only paths that need to be hit by hubs on scale $r$ are those passing 
through $v_i$, which is a hub of $\spc(r)$.
\qed\end{proof}

To finish the reduction, we claim that there is a satisfying assignment for 
$\varphi$ if and only if there is a Steiner tree $T$ for $G_\varphi$ with cost 
at most $12k+\sum_{i=1}^{\ell} 11^{i+1}$. If there is a satisfying assignment 
for $\varphi$, then the tree $T$ contains the edges $\{u_x,t_x\}$ and 
$\{v_0,t_x\}$ for variables $x$ that are set to true, and the edges 
$\{u_x,f_x\}$ and $\{v_0,f_x\}$ for variables $x$ that are set to false. This 
connects every terminal $u_x$ with the root $v_0$, and the cost of these edges 
is $12k$. For every terminal $v_i$ where $i\geq 1$ we can now add the edge 
$\{v_i,s_x\}$ for $s_x\in\{t_x,f_x\}$ that corresponds to a literal of~$C_i$ 
that is true in the satisfying assignment. Since this Steiner vertex~$s_x$ is 
connected to the root $v_0$, we obtain a Steiner tree $T$. The latter edges add 
another $\sum_{i=1}^{\ell} 11^{i+1}$ to the solution cost, and thus the total 
cost is as claimed.

Conversely, consider a minimum cost Steiner tree $T$ in $G_\varphi$. Note that 
for any terminal $u_x$ the tree must contain an incident edge of cost $1$, while 
for any terminal $v_i$ with $i\geq 1$ the tree must contain an incident edge of 
cost $11^{i+1}$. This adds up to a cost of $k+\sum_{i=1}^\ell 11^{i+1}$. Assume 
that there is some variable $x$ such that $T$ contains neither $\{v_0,t_x\}$ 
nor~$\{v_0,f_x\}$. This means that in $T$ the terminal $u_x$ is connected to the 
root $v_0$ through an edge $\{v_i,s_x\}$ for $s_x\in\{t_x,f_x\}$ and some $i\geq 
1$. The edge $v_0s_x$ forms a fundamental cycle with the tree~$T$, which however 
has a shorter length of $11$ compared to the edge $\{v_i,s_x\}$, which has 
length~$11^{i+1}$. Thus removing $\{v_0,s_x\}$ and adding $\{v_i,s_x\}$ instead, 
would yield a cheaper Steiner tree. As this would contradict that $T$ has 
minimum cost, $T$ contains at least one of the edges $\{v_0,t_x\}$ and 
$\{v_0,f_x\}$ for every variable $x$. This adds another $11k$ to the cost, so 
that $T$ costs at least $12k+\sum_{i=1}^\ell 11^{i+1}$. 

If we assume that $12k+\sum_{i=1}^\ell 11^{i+1}$ is also an upper bound on the 
cost of $T$, by the above observations the tree $T$ contains exactly one edge 
incident to every terminal $u_x$ and $v_i$ for $i\geq 1$, and exactly $k$ edges 
incident to $v_0$. Furthermore, for every variable $x$ the latter edges contain 
exactly one of $\{v_0,t_x\}$ and $\{v_0,f_x\}$. Thus $T$ encodes a satisfying 
assignment for $\varphi$, as follows. For every edge $v_0t_x$ we may set $x$ to 
true, and for every edge $\{v_0,f_x\}$ we may set $x$ to false. For every clause 
$C_i$ the corresponding terminal $v_i$ connects through one of the Steiner 
vertices $s_x\in\{t_x,f_x\}$ of a corresponding literal contained in $C_i$. The 
only incident vertices to $s_x$ in $G_\varphi$ are some terminals~$v_j$, the 
terminal $u_x$, and the root $v_0$. As each $v_j$ and also $u_x$ only has one 
incident edge contained in the tree $T$, the tree must contain the edge 
$\{v_0,s_x\}$ so that the root can be reached from $s_x$ in~$T$. Hence $s_x$ 
corresponds to a literal that is true in $C_i$. Using \cref{lem:STP-hd1}, which 
bounds the highway dimension of $G_\varphi$, we obtain \cref{thm:STP-hard}.
\qed\end{proof}

\section{Hardness of \pname{Travelling Salesperson} for highway dimension 6}

We now show hardness of \pname{TSP} for graphs of bounded highway dimension.
We first introduce a simple lemma that will allow us to easily bound the 
highway dimension of our construction by adding edges incrementally. In the 
following we denote the highway dimension of a graph $G$ by $\hd(G)$.

\begin{dfn}
A cost $c^{\star} \in \mathbb{R}$ is \emph{safe} w.r.t.~a (multi-)set of costs
$C \subseteq \mathbb{R}$ if $c^{\star}\geq2\sum_{c\in C}c$.
\end{dfn}
\begin{lem}
Let $G=(V,E)$ be a graph, $E'\subseteq{V \choose 2}$, and $G'=(V,E\cup E')$.
If the edges in $E'$ have safe costs w.r.t.~the edge costs of $G$,
then $\hd(G')\leq\max\{\hd(G),|E'|\}$.
\label{lem:hardness_hd_edge_addition}
\end{lem}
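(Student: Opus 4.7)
The plan is to construct, at every scale $r>0$, a locally $\max\{\hd(G),|E'|\}$-sparse shortest path cover for $G'$, splitting the analysis into two regimes separated by the threshold $W := \sum_{e \in E} c(e)$. By safety, every $e' \in E'$ has cost $c(e') \geq 2W$. The key algebraic observation I will exploit is that any simple path in $G$ has total length at most $W$, whereas every walk in $G'$ that traverses $k \geq 1$ edges of $E'$ has length at least $2kW$.

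For small scales $r < W$, I take $\spc_{G'}(r) := \spc_G(r)$ for some locally $\hd(G)$-sparse cover of $G$. Any shortest path $P$ in $G'$ of length at most $2r < 2W$ cannot contain a single edge of $E'$ (one such edge alone would already contribute $\geq 2W$), so $P$ is a path in $G$; since every $G$-path remains available in $G'$, $P$ is in fact a shortest path in $G$ as well, and is therefore hit by $\spc_G(r)$. The same reasoning shows that the ball of radius $2r$ around any vertex $v$ is the same in $G$ and in $G'$: any vertex within $G'$-distance $2r < 2W$ of $v$ is reached along a path using only edges of $G$. Local sparsity therefore transfers from $G$ to $G'$, giving at most $\hd(G)$ hubs per ball.

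For large scales $r \geq W$, I pick a single endpoint $u_{e'}$ of each $e' = \{u_{e'}, v_{e'}\} \in E'$ and set $\spc_{G'}(r) := \{u_{e'} : e' \in E'\}$, whose size is at most $|E'|$. Any shortest path in $G'$ of length in $(r, 2r]$ has length strictly greater than $W$ and thus cannot lie entirely in $G$; it therefore traverses some new edge $e'$, hence passes through $u_{e'}$, and is hit. Local sparsity is immediate from the size bound.

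Combining the two regimes yields $\hd(G') \leq \max\{\hd(G), |E'|\}$. The only slightly delicate steps I anticipate are the ball-equality used to transfer local sparsity in the small-scale regime, and a short verification that the transition at $r = W$ is cleanly covered by exactly one of the two regimes; both reduce to the same counting of new-edge contributions used throughout.
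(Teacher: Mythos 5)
Your proof is correct and follows essentially the same two-regime strategy as the paper's: the paper splits at $r=c^{\star}/2$ (where $c^{\star}$ is the cheapest edge of $E'$) rather than at $\sum_{e\in E}c(e)$, and hits long paths with a minimum vertex cover of $E'$ rather than one endpoint per edge, but these differences are immaterial. If anything, your explicit ball-equality argument for transferring local sparsity in the small-scale regime is slightly more careful than the paper's one-line appeal to the definition.
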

\begin{proof}
Let $c^{\star}$ be the smallest cost among the edges of $E'$ and
consider a fixed scale $r\in\mathbb{R}^{+}$. If $r<c^{\star}/2$,
then no path of length $l\in(r,2r]$ in $G'$ contains any of the
edges in $E'$. By the definition of the highway dimension of $G$,
there is a locally $\hd(G)$-sparse shortest path cover $\textsc{spc}(r)$
of $G'$ for this scale. Now if $r\geq c^{\star}/2$, then every path
of length $l\in(r,2r]$ in $G'$ must contain an edge of $E'$, since
the costs of all edges in $E$ sum up to at most $r$. Therefore,
we can find a shortest path cover $\textsc{spc}(r)$ of $G'$ simply
by taking a minimum vertex cover of $E'$, which has size at most
$|E'|$.
\qed\end{proof}

We are now ready to prove hardness.

\begin{proof}[of \cref{thm:TSP-hard}.]
\global\long\def\sat{(\leq\!3,3)\textrm{-}\textsc{Sat}}
We reduce from $\sat$~\cite{GareyJohnson}. To that end, let a $\sat$ formula be 
given,
with variables $x_{1},\dots x_{n}$ and clauses $C_{1},\dots,C_{m}$,
where each literal appears at most twice (and each variable at most
three times). We construct a graph $G$ with edge costs taken from
among the values $a\ll b\ll c_{1}\ll\dots\ll c_{n-1}\ll d\ll e\ll 
f_{1}\ll\dots\ll f_{m}$,
where each cost value can be chosen arbitrarily such that it is safe
with respect to the costs of all cheaper edges. For example, there
will be $2n$ edges of cost $a$, hence we choose $b\geq4an$. Let
$T^{\star}$ be (any) $\textsc{TSP}$ tour in $G$ of minimum cost~$|T^{\star}|$.
We consider $T^{\star}$ to be oriented arbitrarily in one of its
two possible orientations.

For every variable $x_{i}$, we introduce a gadget with four vertices
$v_{i1},v_{i2},v_{i3},v_{i4}$ and the edges $\{v_{i1},v_{i3}\}$
and $\{v_{i2},v_{i4}\}$, both of cost $a$. We further add edges
$\{v_{i1},v_{i2}\},\{v_{i2},v_{i3}\},\{v_{i3},v_{i4}\},\{v_{i4},v_{i1}\}$
of cost $b$ each (cf.~\cref{fig:hardness_vertices}). We
will enforce that $T^{\star}$ uses both edges of cost $a$ in every
variable gadget, and we will interpret the variable as 'true', if
the orientation of these edges along the tour is $(v_{i1},v_{i3})$
and $(v_{i4},v_{i2})$, or $(v_{i3},v_{i1})$ and $(v_{i2},v_{i4})$,
and 'false' otherwise. Let $G^{b}$ be the graph we have constructed
so far. Each of the $n$ components of $G^{b}$ has highway dimension
$2$: For scales $r<a$, we can set 
$\textsc{spc}(r)=\bigcup_{i}\{v_{i1},v_{i2}\}$,
and for scales $r\geq a$, we can set 
$\textsc{spc}(r)=\bigcup_{i}\{v_{i1},v_{i3}\}$.
Hence, $\hd(G^{b})=2$.

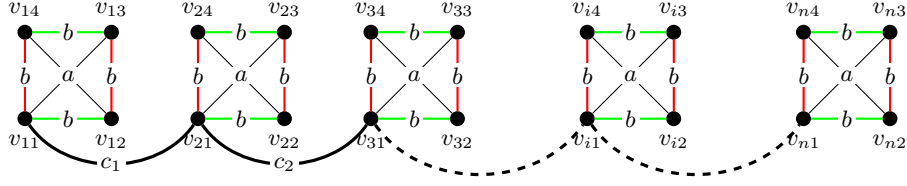
\begin{figure}[h]
\begin{centering}
\begin{tikzpicture}[scale=1.15]
\node[snode,label=above:{$v_{23}$}] (v23) at (3,1) {};
\node[snode,label=below:{$v_{22}$}] (v22) at (3,0) {};
\node[snode,label=below:{$v_{21}$}] (v21) at (2,0) {};
\node[snode,label=above:{$v_{33}$}] (v33) at (5,1) {};
\node[snode,label=above:{$v_{34}$}] (v34) at (4,1) {};
\node[snode,label=below:{$v_{n1}$}] (vn1) at (9,0) {};
\node[snode,label=above:{$v_{24}$}] (v24) at (2,1) {};
\node[snode,label=below:{$v_{12}$}] (v12) at (1,0) {};
\node[snode,label=above:{$v_{13}$}] (v13) at (1,1) {};
\node[snode,label=below:{$v_{11}$}] (v11) at (0,0) {};
\node[snode,label=below:{$v_{31}$}] (v31) at (4,0) {};
\node[snode,label=below:{$v_{32}$}] (v32) at (5,0) {};
\node[snode,label=above:{$v_{14}$}] (v14) at (0,1) {};
\node[snode,label=below:{$v_{i2}$}] (vi2) at (7.5,0) {};
\node[snode,label=above:{$v_{i3}$}] (vi3) at (7.5,1) {};
\node[snode,label=below:{$v_{i1}$}] (vi1) at (6.5,0) {};
\node[snode,label=above:{$v_{n3}$}] (vn3) at (10,1) {};
\node[snode,label=below:{$v_{n2}$}] (vn2) at (10,0) {};
\node[snode,label=above:{$v_{i4}$}] (vi4) at (6.5,1) {};
\node[snode,label=above:{$v_{n4}$}] (vn4) at (9,1) {};
\begin{scope}[>=Stealth]
\path[draw=red, fill=red, arrows={-}, solid, thick] (v14) to node[pos=0.5, fill=white, inner sep = 1.25pt] {$b$} (v11);
\path[draw=green, fill=green, arrows={-}, solid, thick] (v13) to node[pos=0.5, fill=white, inner sep = 1.25pt] {$b$} (v14);
\path[draw=red, fill=red, arrows={-}, solid, thick] (v12) to node[pos=0.5, fill=white, inner sep = 1.25pt] {$b$} (v13);
\path[draw=green, fill=green, arrows={-}, solid, thick] (v11) to node[pos=0.5, fill=white, inner sep = 1.25pt] {$b$} (v12);
\path[draw, arrows={-}, solid, thin] (v12) to node[pos=0.5, fill=white, inner sep = 1.25pt] {$a$} (v14);
\path[draw, arrows={-}, solid, thin] (v11) to node[pos=0.5, fill=white, inner sep = 1.25pt] {$a$} (v13);
\path[draw=red, fill=red, arrows={-}, solid, thick] (v24) to [] node[pos=0.5, fill=white, inner sep = 1.25pt] {$b$} (v21);
\path[draw=green, fill=green, arrows={-}, solid, thick] (v23) to [] node[pos=0.5, fill=white, inner sep = 1.25pt] {$b$} (v24);
\path[draw=red, fill=red, arrows={-}, solid, thick] (v22) to [] node[pos=0.5, fill=white, inner sep = 1.25pt] {$b$} (v23);
\path[draw=green, fill=green, arrows={-}, solid, thick] (v21) to [] node[pos=0.5, fill=white, inner sep = 1.25pt] {$b$} (v22);
\path[draw, arrows={-}, solid, thin] (v22) to [] node[pos=0.5, fill=white, inner sep = 1.25pt] {$a$} (v24);
\path[draw, arrows={-}, solid, thin] (v21) to [] node[pos=0.5, fill=white, inner sep = 1.25pt] {$a$} (v23);
\path[draw=red, fill=red, arrows={-}, solid, thick] (v34) to [] node[pos=0.5, fill=white, inner sep = 1.25pt] {$b$} (v31);
\path[draw=green, fill=green, arrows={-}, solid, thick] (v33) to [] node[pos=0.5, fill=white, inner sep = 1.25pt] {$b$} (v34);
\path[draw=red, fill=red, arrows={-}, solid, thick] (v32) to [] node[pos=0.5, fill=white, inner sep = 1.25pt] {$b$} (v33);
\path[draw=green, fill=green, arrows={-}, solid, thick] (v31) to [] node[pos=0.5, fill=white, inner sep = 1.25pt] {$b$} (v32);
\path[draw, arrows={-}, solid, thin] (v32) to node[pos=0.5, fill=white, inner sep = 1.25pt] {$a$} (v34);
\path[draw, arrows={-}, solid, thin] (v31) to node[pos=0.5, fill=white, inner sep = 1.25pt] {$a$} (v33);
\path[draw=red, fill=red, arrows={-}, solid, thick] (vi4) to node[pos=0.5, fill=white, inner sep = 1.25pt] {$b$} (vi1);
\path[draw=green, fill=green, arrows={-}, solid, thick] (vi3) to node[pos=0.5, fill=white, inner sep = 1.25pt] {$b$} (vi4);
\path[draw=red, fill=red, arrows={-}, solid, thick] (vi2) to [] node[pos=0.5, fill=white, inner sep = 1.25pt] {$b$} (vi3);
\path[draw=green, fill=green, arrows={-}, solid, thick] (vi1) to [] node[pos=0.5, fill=white, inner sep = 1.25pt] {$b$} (vi2);
\path[draw, arrows={-}, solid, thin] (vi2) to [] node[pos=0.5, fill=white, inner sep = 1.25pt] {$a$} (vi4);
\path[draw, arrows={-}, solid, thin] (vi1) to [] node[pos=0.5, fill=white, inner sep = 1.25pt] {$a$} (vi3);
\path[draw=red, fill=red, arrows={-}, solid, thick] (vn4) to [] node[pos=0.5, fill=white, inner sep = 1.25pt] {$b$} (vn1);
\path[draw=green, fill=green, arrows={-}, solid, thick] (vn3) to [] node[pos=0.5, fill=white, inner sep = 1.25pt] {$b$} (vn4);
\path[draw=red, fill=red, arrows={-}, solid, thick] (vn2) to [] node[pos=0.5, fill=white, inner sep = 1.25pt] {$b$} (vn3);
\path[draw=green, fill=green, arrows={-}, solid, thick] (vn1) to [] node[pos=0.5, fill=white, inner sep = 1.25pt] {$b$} (vn2);
\path[draw, arrows={-}, solid, thin] (vn2) to [] node[pos=0.5, fill=white, inner sep = 1.25pt] {$a$} (vn4);
\path[draw, arrows={-}, solid, thin] (vn1) to [] node[pos=0.5, fill=white, inner sep = 1.25pt] {$a$} (vn3);
\path[draw, arrows={-}, dashed, very thick] (vi1) to [bend right=55] (vn1);
\path[draw, arrows={-}, dashed, very thick] (v31) to [bend right=55] (vi1);
\path[draw, arrows={-}, solid, very thick] (v21) to [bend right=55] node[pos=0.5, fill=white, inner sep = 1.25pt] {$c_2$} (v31);
\path[draw, arrows={-}, solid, very thick] (v11) to [bend right=55] node[pos=0.5, fill=white, inner sep = 1.25pt] {$c_1$} (v21);
\end{scope}
\end{tikzpicture}
\par\end{centering}
\caption{Vertex gadgets.\label{fig:hardness_vertices}}
\end{figure}

We connect the variable gadgets by adding edges $\{v_{i1},v_{(i+1)1}\}$
of cost $c_{i}$ for all $i\in\{1,\dots,n-1\}$ in this order. By
definition, each new edge has a safe cost (w.r.t.~all previous edges),
and hence, by \cref{lem:hardness_hd_edge_addition}, the resulting
graph $G^{c}$ has highway dimension $\hd(G^{2})=2$. We will enforce that 
$T^{\star}$ uses each of these edges exactly twice.

For each clause $C_{j}$, we introduce a clause gadget with six vertices
$w_{j1}$, $w'_{j1}$, $w_{j2}$, $w_{j2}'$, $w_{j3}$, $w_{j3}'$ 
(cf.~\cref{fig:hardness_clause}).
We first add three edges 
$\{w_{j1}',w_{j2}\}$, $\{w_{j2}',w_{j3}\}$, $\{w_{j3}',w_{j1}\}$
of cost $d$ each. Since these edges are disconnected, the resulting
graph $G^{d}$ still has $\hd(G^{d})=2$. Now, we add three
edges $\{w_{j1},w_{j1}'\}$, $\{w_{j2},w_{j2}'\}$, $\{w_{j3},w_{j3}'\}$
of cost $e$ each. In the resulting graph $G^{e}$, clauses are still
disconnected. Since we added three edges with safe costs for each
clause, by \cref{lem:hardness_hd_edge_addition}, we have 
$\hd(G^{e})\leq3$.

Finally, we connect each clause gadget for clause $C_{j}$ to the
three variable gadgets corresponding to the variables appearing in
$C_{j}$ (cf.~\cref{fig:hardness_clause}). To this end, we
add six edges per clause, step by step in the order of increasing
clause indices. Let $C_{j}=\lambda_{j1}\vee\lambda_{j2}\vee\lambda_{j3}$
and consider $k\in\{1,2,3\}$. Assume $\lambda_{jk}=x_{i}$, i.e.,
$x_{i}$ appears as a positive literal in $C_{j}$. Let $\delta=0$
if $C_{j}$ is the first clause containing the literal $x_{i}$, i.e.,
$x_{i}\notin C_{j'}$ for $j'<j$, and $\delta=2$ otherwise. We add
the edges $\{w_{jk},v_{i(2+\delta)}\},\{w_{jk}',v_{i(1+\delta)}\}$
of cost $f_{j}$. Now assume $\lambda_{jk}=\bar{x}_{i}$ and let again
$\delta=0$ if $\bar{x}_{i}\notin C_{j'}$ for $j'<j$, and $\delta=2$
otherwise. We add the edges 
$\{w_{jk},v_{i(3-\delta)}\},\{w_{jk}',v_{i(2+\delta)}\}$
of cost $f_{j}$. Since we add six edges of safe costs in each step,
by \cref{lem:hardness_hd_edge_addition}, the final graph $G=G^{f}$
has $\hd(G)\leq6$.

\begin{figure}[h]
\begin{centering}
\begin{center}
\begin{tikzpicture}[scale=1.25]
\node[snode,label=above:{$w_{j3}$}] (w31) at (30:1) {};
\node[snode,label=above right:{$w_{j2}'$}] (w22) at (90:1) {};
\node[snode,label=below left:{$w_{j2}$}] (w21) at (150:1) {};
\node[snode,label=above left:{$w_{j1}'$}] (w12) at (210:1) {};
\node[snode,label=below right:{$w_{j1}$}] (w11) at (270:1) {};
\node[snode,label=below:{$w_{j3}'$}] (w32) at (330:1) {};

\node[name path= Cj, shape=circle, minimum size=5mm] (Cj) at (0,0) {\small$C_j \!=\! \bar{x} \!\vee\! y \!\vee\! \bar{z}$};

\node[snode] (x11) at ($(w12)+(240:2)$) {};
\node[snode] (x12) at ($(w12)+(240:1)$) {};
\node[snode] (x13) at ($(w11)+(240:1)$) {};
\node[snode] (x14) at ($(w11)+(240:2)$) {};
\node[name path= x, shape=circle, minimum size=5mm] (x) at ($(x14)+(0.4,0)$) {$x$};

\node[snode] (y11) at ($(w32)+(0:1)$) {};
\node[snode] (y13) at ($(w31)+(2,0)$) {};
\node[snode] (y12) at ($(w31)+(1,0)$) {};
\node[snode] (y14) at ($(w32)+(2,0)$) {};
\node[name path= y, shape=circle, minimum size=5mm] (y) at ($(y14)+(0.4,0)$) {$y$};

\node[snode] (z14) at ($(w22)+(120:1)$) {};
\node[snode] (z12) at ($(w21)+(120:2)$) {};
\node[snode] (z13) at ($(w22)+(120:2)$) {};
\node[snode] (z11) at ($(w21)+(120:1)$) {};
\node[name path= z, shape=circle, minimum size=5mm] (z) at ($(z13)+(0.4,0)$) {$z$};

\begin{scope}[>=Stealth]
\path[draw, arrows={-}, solid, very thick] (w32) to [] node[pos=0.5, fill=white, inner sep = 1.25pt] {$d$} (w11);
\path[draw, arrows={-}, solid, very thick] (w22) to [] node[pos=0.5, fill=white, inner sep = 1.25pt] {$d$} (w31);
\path[draw, arrows={-}, solid, very thick] (w12) to [] node[pos=0.5, fill=white, inner sep = 1.25pt] {$d$} (w21);
\path[draw, arrows={-}, solid, thick] (w31) to [] node[pos=0.5, fill=white, inner sep = 1.25pt] {$e$} (w32);
\path[draw, arrows={-}, solid, thick] (w21) to [] node[pos=0.5, fill=white, inner sep = 1.25pt] {$e$} (w22);
\path[draw, arrows={-}, solid, thick] (w11) to [] node[pos=0.5, fill=white, inner sep = 1.25pt] {$e$} (w12);
\path[draw=red, fill=red, arrows={-}, solid, thick] (x14) to [] node[pos=0.5, fill=white, inner sep = 1.25pt] {$b$} (x11);
\path[draw=green, fill=green, arrows={-}, solid, thick] (x13) to [] node[pos=0.5, fill=white, inner sep = 1.25pt] {$b$} (x14);
\path[draw=red, fill=red, arrows={-}, solid, thick] (x12) to [] node[pos=0.5, fill=white, inner sep = 1.25pt] {$b$} (x13);
\path[draw=green, fill=green, arrows={-}, solid, thick] (x11) to [] node[pos=0.5, fill=white, inner sep = 1.25pt] {$b$} (x12);
\path[draw, arrows={-}, solid, thin] (x12) to [] node[pos=0.5, fill=white, inner sep = 1.25pt] {$a$} (x14);
\path[draw, arrows={-}, solid, thin] (x11) to [] node[pos=0.5, fill=white, inner sep = 1.25pt] {$a$} (x13);
\path[draw=red, fill=red, arrows={-}, solid, thick] (y14) to [] node[pos=0.5, fill=white, inner sep = 1.25pt] {$b$} (y11);
\path[draw=green, fill=green, arrows={-}, solid, thick] (y13) to [] node[pos=0.5, fill=white, inner sep = 1.25pt] {$b$} (y14);
\path[draw=red, fill=red, arrows={-}, solid, thick] (y12) to [] node[pos=0.5, fill=white, inner sep = 1.25pt] {$b$} (y13);
\path[draw=green, fill=green, arrows={-}, solid, thick] (y11) to [] node[pos=0.5, fill=white, inner sep = 1.25pt] {$b$} (y12);
\path[draw, arrows={-}, solid, thin] (y12) to [] node[pos=0.5, fill=white, inner sep = 1.25pt] {$a$} (y14);
\path[draw, arrows={-}, solid, thin] (y11) to [] node[pos=0.5, fill=white, inner sep = 1.25pt] {$a$} (y13);
\path[draw=red, fill=red, arrows={-}, solid, thick] (z14) to [] node[pos=0.5, fill=white, inner sep = 1.25pt] {$b$} (z11);
\path[draw=green, fill=green, arrows={-}, solid, thick] (z13) to [] node[pos=0.5, fill=white, inner sep = 1.25pt] {$b$} (z14);
\path[draw=red, fill=red, arrows={-}, solid, thick] (z12) to [] node[pos=0.5, fill=white, inner sep = 1.25pt] {$b$} (z13);
\path[draw=green, fill=green, arrows={-}, solid, thick] (z11) to [] node[pos=0.5, fill=white, inner sep = 1.25pt] {$b$} (z12);
\path[draw, arrows={-}, solid, thin] (z12) to [] node[pos=0.5, fill=white, inner sep = 1.25pt] {$a$} (z14);
\path[draw, arrows={-}, solid, thin] (z11) to [] node[pos=0.5, fill=white, inner sep = 1.25pt] {$a$} (z13);
\path[draw, arrows={-}, solid, ultra thick] (w12) to [] node[pos=0.5, fill=white, inner sep = 1.25pt] {$f_j$} (x12);
\path[draw, arrows={-}, solid, ultra thick] (w11) to [] node[pos=0.5, fill=white, inner sep = 1.25pt] {$f_j$} (x13);
\path[draw, arrows={-}, solid, ultra thick] (w22) to [] node[pos=0.5, fill=white, inner sep = 1.25pt] {$f_j$} (z14);
\path[draw, arrows={-}, solid, ultra thick] (w21) to [] node[pos=0.5, fill=white, inner sep = 1.25pt] {$f_j$} (z11);
\path[draw, arrows={-}, solid, ultra thick] (w32) to [] node[pos=0.5, fill=white, inner sep = 1.25pt] {$f_j$} (y11);
\path[draw, arrows={-}, solid, ultra thick] (w31) to [] node[pos=0.5, fill=white, inner sep = 1.25pt] {$f_j$} (y12);
\end{scope}
\end{tikzpicture}
\par\end{center}
\par\end{centering}
\caption{Clause gadget.\label{fig:hardness_clause}}
\end{figure}
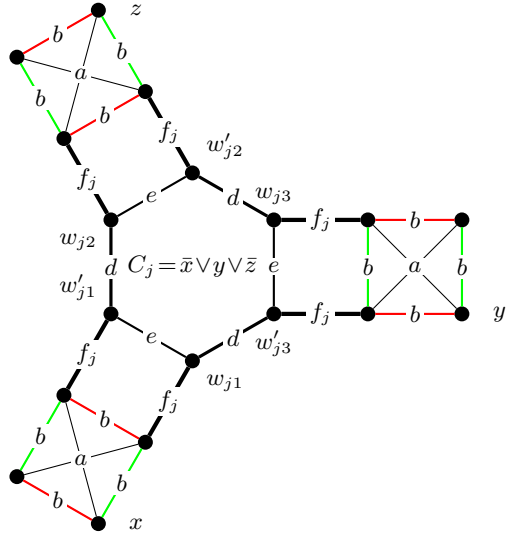

Now let $W=2\sum_{j=1}^{m}f_{j}+2me+3md+2\sum_{i=1}^{n-1}c_{i}+(2n-m)b+2na$.
We claim that $|T^{\star}|\leq W$ if and only if the $\sat$ formula
is satisfiable.

For the first part of the claim, assume the $\sat$ formula is satisfiable,
and, for all $j\in\{1,\dots,m\}$, let $y_{j}$ be a unique variable
that satisfies clause $C_{j}$ in the corresponding assignment. We
describe a tour of cost $W$ by constructing a Eulerian graph consisting
of edges of $G$ (sometimes twice) that connect all vertices with
a total cost of $W$. We start by including the cycle of cost $3e+3d$
within each clause gadget, and each edge between different variable
gadgets twice, for a total cost of $3me+3md+2\sum_{i}^{n-1}c_{i}$.
For every variable $x_{i}$ that is set to true, we include the cycle
$v_{i1},v_{i3},v_{i4},v_{i2},v_{i1}$ of cost $2b+2a$. For every
variable $x_{i}$ that is set to false, we include the cycle 
$v_{i1},v_{i3},v_{i2},v_{i4},v_{i1}$
of cost $2b+2a$. The resulting graph $T'$ is Eulerian, since we
added only cycles, but not yet connected. Its cost is 
$3me+3md+2\sum_{i=1}^{n-1}c_{i}+2nb+2na$. 

Now take a clause $C_{j}$ that is satisfied by variable $x_{i}=y_{j}$.
We add the edges $\{w_{jk},v_{ir}\},\{w_{jk}',v_{i(r+3\mod4)}\}$
that connect the corresponding clause and variable gadgets. Observe
that the edge $\{w_{jk},w_{jk}'\}$ is in $T'$ and so is the edge
$\{v_{ir},v_{i(r+3\mod4)}\}$, because $x_{i}$ satisfies $C_{j}$.
We can thus remove these two edges and obtain a Eulerian graph. This
increases the cost of the graph by $2f_{j}-e-b$. The final graph
$T$ is connected, Eulerian, and has cost $W$ as claimed.

For the second part of the claim, consider any $\textsc{TSP}$ tour
$T$ with $|T|\leq W$. Observe that $3f_{m}>W$, $3f_{m-1}>W-2f_{m}$,
and so on. Since, for all $j\in\{1,\dots,m$\}, the edges of cost
$f_{j}$ form a cut of $G$, we can conclude that $T$ uses exactly
2 edges of cost $f_{j}$ (or one of them twice). Similarly, 
$(2m+1)e>W-\sum_{j=1}^{n-1}f_{j}$,
but $T$ needs to use at least two edges of cost $e$ (or one of them
twice) to connect all vertices of a clause gadget for $C_{j}$ to
the two edges of cost $f_{j}$ that are part of the tour. We can again
conclude that $T$ uses exactly two edges of cost $e$ in each clause
gadget. And again $(3m+1)d>W-\sum_{j=1}^{n-1}f_{j}-2me$, but $T$
needs to use at least three times an edge of cost $d$ to connect
all vertices of a clause gadget, provided that it uses only two edges
of cost $e$. We conclude that $T$ uses exactly three edges of cost
$d$ in each clause gadget. Finally, observe that the only way to
connect all vertices of a clause gadget with two edges of cost $e$
and three edges of cost~$d$ needs that the two edges of cost $f_{j}$
are distinct and connect to the same vertex gadget. We will rely on
this observation in the following, and on the fact that costs are chosen
to be safe with respect to all smaller costs (in particular, $e$ is safe 
with respect to $a,b,c,d$, etc.).

The first implication is that $T$ needs to use every edge between
vertex gadgets twice, since the vertex gadgets are not connected via
clause gadgets. Our analysis so far implies that edges within variable
gadgets that are used in $T$ incur a cost of at most $W'=(2n-m)b+2na$.
Let $k\in\{0,\dots4\}$ be the number of clause gadgets in $T$ connected
to the variable gadget of $x_{i}$. Clearly, $T$ needs to use at
least $4-k$ edges within the variable gadget. The cost within a variable
gadget depending on $k$ is at least $2b+2a$ (if $k=0$), $b+2a$
(if $k=1$), $2a$ (if $k=2$), $a$ (if $k=1$), or 0 (if $k=4$).
Since each variable appears in at most four clauses and each clause
has at most 3 literals, we have $m\leq4n/3<2n$. To obtain a cost
of at most $W'$, we must thus have $k\leq2$ in each variable gadget,
since $b\gg a$. Furthermore, if there are two clause gadgets connected
to a variable gadget, they must connect to disjoint vertices of the
clause to allow for a cost of at most $2a$ in the variable gadget. This
means that the corresponding literals must either both be positive
or both be negative. But if there is an assignment of clauses to variables
such that at most two clauses are assigned to each variable and the
corresponding literal of the assigned clauses must agree, this immediately
yields a satisfying assignment of the $\sat$-formula.
\qed\end{proof}

\section{Conclusions}\label{sec:concl}

We showed that, somewhat surprisingly, graphs of highway dimension $1$ exhibit 
a rich combinatorial structure. On one hand, it was already 
known~\cite{feldmann2018} that these graphs are not minor-closed and thus their 
treewidth is unbounded. Here we additionally showed that \pname{STP} is weakly 
$\mathsf{NP}$-hard on such graphs, further confirming that these graphs have 
non-trivial properties. On the other hand, we proved in \cref{lem:tw} that the 
treewidth of a graph of highway dimension $1$ is logarithmically bounded in the 
aspect ratio $\alpha$. This in turn can be exploited to obtain a very efficient 
FPTAS for both \pname{STP} and \pname{TSP}.

At this point one may wonder whether it is possible to generalize \cref{lem:tw} 
to larger values of the highway dimension. In particular, in~\cite{feldmann2018} 
it was suggested that the treewidth of a graph of highway dimension $h$ might 
be bounded by, say, $O(h\,\textrm{polylog}(\alpha))$. However such a bound is 
highly unlikely in general, since it would have the following consequence for 
the \pname{$k$-Center} problem, for which $k$ vertices (centers) need to be 
selected in a graph such that the maximum distance of any vertex to its closest 
center is minimized. It was shown in~\cite{Feldmann15} that it is $\mathsf{NP}$-hard to 
compute a $(2-\eps)$-approximation for \pname{$k$-Center} on graphs of highway 
dimension $O(\log^2 n)$, for any $\eps>0$. Given such a graph, the same 
preprocessing of \cref{sec:scheme} could be used to derive an analogue of 
\cref{lem:tw2}, i.e., a graph $G'$ of treewidth $O(\textrm{polylog}(n/\eps))$ 
could be computed for the net $N$. Moreover, a $2$-approximation for 
\pname{$k$-Center} can be computed in polynomial time on any 
graph~\cite{hochbaum1986bottleneck}, and if the input has treewidth $t$ a 
$(1+\eps)$-approximation can be computed in $(t/\eps)^{O(t)} n^{O(1)}$ 
time~\cite{katsikarelis2017structural}. Using the same arguments to prove 
\cref{thm:main} for \pname{STP} and \pname{TSP}, it would now be possible to 
compute a $(1+\eps)$-approximation for \pname{$k$-Center} in quasi-polynomial 
time (cf.~\cite{DBLP:conf/swat/FeldmannM18}). That is, we would obtain a QPTAS 
for graphs of highway dimension $O(\log^2 n)$, which is highly unlikely given 
that computing a $(2-\eps)$-approximation is $\mathsf{NP}$-hard on such graphs.

The above argument rules out any bound of $(h\log\alpha)^{O(1)}$ for graphs of 
highway dimension $h$ and aspect ratio $\alpha$, unless $\mathsf{NP}$-hard problems admit 
quasi-polynomial time algorithms. In fact, we conjecture that the 
\pname{$k$-Center} problem is $\mathsf{NP}$-hard to approximate within a factor of $2-\eps$ 
for graphs of constant highway dimension (for some constant larger than $1$). If 
this is true, then the above argument even rules out a treewidth bound of 
$f(h)\,\textrm{polylog}(\alpha)$ for any function~$f$. Thus, in order to answer 
the open problem of~\cite{feldmann2018} and obtain a PTAS for graphs of 
constant highway dimension, a different approach seems to be needed.

\printbibliography

\end{document}